\begin{document}

\title{\sys: Oblivious Serializable Transactions in the Cloud}
\author{
  \begin{tabular}{c@{\hspace{3em}}c@{\hspace{3em}}c}
    Natacha Crooks$^\star$\tinydag & Matthew Burke\tinydag & Ethan Cecchetti\tinydag \\[0.25em]
    Sitar Harel\tinydag & Rachit Agarwal\tinydag & Lorenzo Alvisi\tinydag
  \end{tabular}\\[1.5em]
  $^\star$University of Texas at Austin \hspace{3em} \tinydag{}Cornell University}
\date{}
\maketitle
\thispagestyle{empty}

\begin{abstract}
    \vspace{5pt}
    \par This paper presents the design and implementation of \sys{},
    the first system to provide ACID transactions while also hiding
    access patterns. \sys{} uses as its building block oblivious RAM,
    but turns the demands of supporting transactions into a
    performance opportunity. By executing transactions within epochs
    and delaying commit decisions until an epoch ends, \sys{} reduces
    the amortized bandwidth costs of oblivious storage and increases
    overall system throughput. These performance gains, combined with
    new oblivious mechanisms for concurrency control and recovery,
    allow \sys{} to execute OLTP workloads with reasonable throughput:
    it comes within 5\texttimes\ to 12\texttimes\ of a non-oblivious
    baseline on the TPC-C, SmallBank, and FreeHealth
    applications. Latency overheads, however, are higher
    (70\texttimes\ on TPC-C).
\end{abstract}

\section{Introduction} 
This paper presents 
\sys{}, the first cloud-based key value store that supports transactions 
while hiding access patterns from cloud providers.
\sys{} aims to mitigate the fundamental tension between the
convenience of offloading data to the cloud, and the significant
privacy concerns that doing so creates. On the one hand, cloud
services~\cite{s3,simpledb,azuretables,documentdb,clouddatastore}
offer clients scalable, reliable IT solutions and present application
developers with feature-rich environments (transactional support,
stronger consistency guarantees~\cite{dynamodb,mongodb}, etc.).
Medical practices, for instance, increasingly prefer to use
cloud-based software to manage electronic health records
(EHR)~\cite{carecloud,kuo2011opportunities}.  On the other hand, many
applications that could benefit from cloud services store personal
data that can reveal sensitive information even when encrypted or
anonymized~\cite{narayanan08robust,singel09netflix,narayanan10myths,wang17splinter}.
For example, charts accessed by oncologists can reveal not only
whether a patient has cancer, but also, depending on the frequency of
accesses (e.g., the frequency of chemotherapy appointments),
indicate the cancer's type and severity.  Similarly, travel agency
websites have been suspected of increasing the price of frequently
searched flights~\cite{wang17splinter}. Hiding \textit{access
  patterns}---that is, hiding not only the content of an object, but
also when and how frequently it is accessed, is thus often desirable.

Responding to this challenge, the systems community has taken a
fresh look at  private data
access. Recent solutions, whether based on private information
retrieval~\cite{gupta16popcorn,melchor14xpir}, Oblivious
RAM~\cite{sahin2016taostore,lorch2013shroud,cecchetti2017solidus},
function sharing~\cite{wang17splinter}, or trusted
hardware~\cite{saba2017oblidb,arasu2013cipherbase,bajaj2011trusteddb,lorch2013shroud,tu2013monomi},
show that it is possible to support complex SQL queries without
revealing access patterns.

\sys addresses a complementary issue: supporting ACID transactions
while guaranteeing data access privacy.  This combination raises
unique challenges~\cite{arasu2013cipherbase}, as
concurrency control mechanisms used to enforce
isolation, and techniques used
to enforce atomicity and durability, all
make hiding access patterns more problematic (\S\ref{sec:threat}).

\sys takes as its starting point Oblivious RAM, which provably hides
all access patterns.  Existing ORAM implementations, however, cannot
support transactions. First, they are not fault-tolerant. For security
and performance, they often store data in a client-side
\textit{stash}; durability requires the stash content to be
recoverable after a failure, and preserving privacy demands hiding the
stash's size and contents, even during failure recovery.  Second, ORAM
provides limited or no support for
concurrency~\cite{boyle16pram,williams12privatefs,stefanov2013oblivistore,sahin2016taostore},
while transactional systems are expected to sustain highly concurrent
loads.

\sys demonstrates that the demands of supporting transactions can
not only be met, but also turned into a performance opportunity. Its
key insight is that  transactions actually afford more 
flexibility than the single-value operations supported by previous
ORAMs. For example, serializability~\cite{Papadimitriou1979serializability}
requires that the effects of transactions be reflected consistently in
the state of the database \textit{only after they commit}.  \sys
leverages this flexibility to delay committing transactions until the
end of fixed-size epochs, buffering their execution at a trusted proxy
and enforcing consistency and durability only at epoch boundaries.
This delay improves ORAM throughput without weakening
privacy.

The ethos of \textit{delayed visibility} is the core that drives
\sys's design. First, it allows \sys to implement a multiversioned
database atop a single-versioned ORAM, so that read operations proceed
without blocking, as with other multiversioned
databases~\cite{bernstein1983mcc}, and intermediate writes are
buffered locally: only the \textit{last} value of any key modified
during an epoch is written back to the ORAM. Delaying writes reduces
the number of ORAM operations needed to commit a transaction, lowering
amortized CPU and bandwidth costs without increasing contention:
\sys's concurrency control ensures that delaying commits does not
affect the set of values that transactions executing within the same
epoch can observe.

Second, it allows \sys{} to securely parallelize Ring
ORAM~\cite{ren15ringoram}, the ORAM construction on which it 
builds. \sys{} pipelines conflicting ORAM operations
rather than processing them sequentially, as existing ORAM
implementations do. This parallelization, however, is only secure if
the write-back phase of the ORAM algorithm is delayed until
pre-determined times, namely, epoch boundaries.

Finally, delaying visibility gives \sys{} the ability to abort entire
epochs in case of failure. \sys{} leverages this flexibility, along
with the near-deterministic write-back algorithm used by Ring ORAM, to
drastically reduce the information that must be logged to guarantee
durability and privacy-preserving crash recovery.

The results of a prototype implementation of \sys{} are promising. On
three applications (TPC-C~\cite{tpcc},
SmallBank~\cite{difallah2013oltpbench}, and
FreeHealth~\cite{freehealthfrench}, a real medical application) \sys{}
is within 5\texttimes-12\texttimes\ of the throughput of non-private
baselines.  Latency is higher (70\texttimes), but remains reasonable
(in the hundreds of milliseconds).

To summarize, this paper makes three
contributions:
\begin{enumerate}
\item It presents the design, implementation, and evaluation of the
  first ACID transactional system that also hides access patterns.
\item It introduces an epoch-based design that leverages the
  flexibility of transactional workloads to increase overall system
  throughput and efficiently recover from failures.
\item It provides the first formal security definition of a
  transactional, crash-prone, and private database. \sys{} uses the
  UC-security framework~\cite{canetti01}, ensuring that security
  guarantees hold under concurrency and composition.
\end{enumerate}

\sys{} also has several limitations.  First, like most ORAMs
that regulate the interactions of multiple clients, it relies on a
local centralized proxy, which introduces issues of 
  fault-tolerance and scalability.
   Second, \sys{} does
not currently support range or complex SQL queries.
Addressing the consistency challenge of maintaining
  oblivious
  indices~\cite{arasu2013cipherbase,saba2017oblidb,zheng17opaque} in
  the presence of transactions is a promising avenue for future
  work.

\section{Threat and Failure Model}
\label{sec:threat}

\sys's threat and failure assumptions aim to model deployments
similar to those of medical practices, where  doctors and
nurses access medical records through an on-site server, but choose to outsource the integrity and availability of
those records to a cloud storage service~\cite{carecloud,kuo2011opportunities}.

\par\textbf{Threat Model}. \sys{} adopts a \textit{trusted
  proxy} threat
model~\cite{stefanov2013oblivistore,sahin2016taostore,williams12privatefs}:
it assumes multiple mutually-trusting client applications interacting
with a single trusted proxy in a single shared administrative domain.
The applications issue transactions and the proxy manages their
execution, sending read and write requests on their behalf over an
asynchronous and unreliable network to an {\em untrusted storage
  server}.
This server is controlled by an honest-but-curious adversary that can
observe and control the timing of communication to and from the proxy,
but not the on-site communication between application clients and the
proxy. We extend our threat model to a fully
malicious adversary in Appendix~\ref{sec:integrity}.
We consider attacks that leak information by
  exploiting timing channel vulnerabilities in modern
  processors~\cite{Lipp2018meltdown,Kocher2018spectre,bulck2018foreshadow} to be out of scope. \sys
guarantees that the adversary will learn no information about:
\one~the decision (commit/abort) of any ongoing transaction; \two~the
number of operations in an ongoing transaction; \three~the type of
requests issued to the server; and \four~the actual data they
access.  \sys does not seek to hide the type of application that is
currently executing (ex: OLTP vs OLAP).

    \par \textbf {Failure Model}. \sys{} assumes cloud storage is 
    reliable, but, unlike previous ORAMs,
    explicitly considers that both application clients and the proxy
    may fail. These failures should  invalidate neither \sys's privacy
    guarantees nor the Durability and Atomicity of transactions.

 \section{Towards Private Transactions}
\label{sec:threat}

Many distributed, disk-based commercial database systems~\cite{mysql,corbett2013spanner,baker11megastore} separate concurrency
control logic from storage management: SQL queries and transactional requests are
regulated in a concurrency control unit and are subsequently converted to 
simple read-write accesses to key-value/file system storage. As ORAMs expose a read-write address
space to users, a logical first attempt at implementing
oblivious transactions would simply replace the database storage with an arbitrary ORAM.
This black-box approach, however, raises both security concerns
(\S\ref{subsec:blackboxsecurity}) and performance/functionality issues (\S\ref{subsec:performance})

Security guarantees can be compromised by simply enforcing
the ACID properties. Ensuring Atomicity, Isolation, and Durability
imposes additional structure on the order of individual reads and
writes, introducing sources of information
leakage~\cite{sheff2016sss,arasu2013cipherbase} that do not exist in
non-transactional ORAMs (\S\ref{subsec:blackboxsecurity}).
Performance and functionality, on the other hand, are hampered by the
inability of current ORAMs to efficiently support highly concurrent
loads and guarantee Durability.

\subsection{Security for Isolation and Durability}
\label{subsec:blackboxsecurity}

The mechanisms used to guarantee Isolation, Atomicity, and Durability introduce
timing correlations that directly leak information about the data accessed
by ongoing transactions.

\par \textbf{Concurrency Control}.  Pessimistic concurrency controls
like two-phase locking~\cite{eswaran1976ncp} delay operations that
would violate serializability: a write operation from transaction
$T_1$ cannot execute concurrently with any operation to the same
object from transaction $T_2$. Such blocking can potentially reveal
sensitive information about the data, even when executing on top of a
construction that hides access patterns: a sudden drop in throughput
could reveal the presence of a deadlock, of a write-heavy transaction
blocking the progress of read transactions, or of highly contended
items accessed by many concurrent transactions.  More aggressive
concurrency control schemes like timestamp ordering or multiversioned
concurrency control%
~\cite{bernstein1983mcc,reed1979tso,larson2011hekaton,
  agrawal1990locks,jones10low,reddy2004speculative,xie2015callas}
allow transactions to observe the result of the writes of other
ongoing transactions. These schemes improve performance in contended
workloads, but introduce the potential for {\em cascading aborts}: if
a transaction aborts, all transactions that observed its write must
also abort. If a write-heavy transaction $T_{\it heavy}$ aborts, it
may cause a large number of transactions to rollback, again revealing
information about $T_{\it heavy}$ and, perhaps more
  problematically, about the set of objects that $T_{\it heavy}$
  accessed.
\par \textbf{Failure Recovery}.
When recovering from failure, Durability requires preserving the effects of committed transactions, while
Atomicity demands removing any changes caused by partially-executed transactions. 
Most commercial systems~\cite{
  mysqlcluster,mysql,sqlserver} preserve these
properties through variants of \textit{undo} and \textit {redo}
logging.
To guarantee Durability,
write and commit operations
are written to a redo log that is replayed after a failure. To
guarantee Atomicity, writes performed by partially-executed transactions 
are \textit{undone} via an \textit{undo log}, restoring objects to
their last committed state. Unfortunately, this undo process
can leak information: the number of undo operations reveals the existence of ongoing
transactions, their length, and the number of operations
that they performed.

\subsection{Performance/functionality limitations}
\label{subsec:performance}

Current ORAMs align poorly with the need of modern OLTP workloads,
which must support large numbers of concurrent requests; in contrast,
most ORAMs admit little to no
concurrency~\cite{boyle16pram,williams12privatefs,stefanov2013oblivistore,sahin2016taostore}
(we benchmark the performance of sequential Ring ORAM in
Figure~\ref{graph:parallel}).

More problematically, ORAMs provide no support for
fault-tolerance. Adding support for Durability presents two main
challenges.  First, most ORAMs require the use of a
  \textit{stash} that temporarily buffers objects at the client and
  requires that these objects be written out to server storage in very
  specific ways (as we describe further in
  \S\ref{sec:background}). This process aligns poorly with
  guaranteeing Durability for transactions. Consider for example a
  transaction $T_1$ that reads the version of object $x$ written by
  $T_2$ and then writes object $y$. To recover the database to a
  consistent state, the update to $x$ should be flushed to cloud
  storage before the update to $y$. It may however not be possible to
  {\em securely} flush $x$ from the stash before $y$.  Second, ORAMs
  store metadata at the client to ensure that cloud storage observes a
  request pattern that is independent of past and currently executing
  operations. As we show in \S\ref{sec:durability}, recovering this
  metadata after a failure can lead to duplicate accesses that leak information.

\subsection{Introducing \sys{}}
These challenges motivate the need to co-design the
transactional and recovery logic with the underlying ORAM data
structure. The design should satisfy three goals: \one~security---the
system should not leak access patterns; \two~correctness---\sys{}
should guarantee that transactions are serializable; and
\three~performance---\sys{} should scale with the number of
clients. The principle of \textit{workload independence} underpins
\sys's security: the sequence of requests sent to cloud storage shoud
remain independent of the type, number, and access set of the
transactions being executed. Intuitively, we want \sys's sequence of
accesses to cloud storage to be statistically indistinguishable from a
sequence that can be generated by an \sys{} {\em simulator} with no
knowledge of the actual transactions being run by \sys{}. If this
condition holds, then observing \sys's accesses cannot reveal to the
adversary any information about \sys's workload. We formalize this
intuition in our security definition in \S\ref{sec:security}.

Much of \sys's novelty lies not in developing new concurrency control
or recovery mechanisms, but in identifying what standard database
techniques can be leveraged to lower the costs of ORAM while retaining
security, and what techniques instead subtly break obliviousness.

To preserve workload independence while
guaranteeing good performance in the presence of concurrent requests,
\sys{} centers its design around the notion of \textit{delayed
  visibility}. Delayed visibility leverages the observation that, on
the one hand, ACID consistency and Durability apply only when
transactions commit, and, on the other, commit operations can be
delayed. \sys{} leverages this flexibility to delay commit operations
until the end of \textit{fixed-size epochs}. This approach
allows \sys{} to \one~amortize the cost of accessing an ORAM over many
concurrently executing requests; \two~recover efficiently from
failures; and \three~preserve workload independence: the epochs'
deterministic structure allows \sys{} to decouple its externally
observable behavior from the specifics of the transactions being
executed.

\section{Background}
\label{sec:background}

\textbf{O}blivious \textbf{R}emote \textbf{A}ccess \textbf{M}emory is
a cryptographic protocol that allows clients to access data outsourced
to an untrusted server without revealing what is being
accessed~\cite{goldreich1996software}; it generates a sequence of accesses to the server that is completely \textit{independent} of the operations issued by the
client.
We focus specifically
on \textit{tree-based} ORAMs, whose constructions are more efficiently implementable in real
systems: to date, they have been implemented in hardware~\cite{fletcher2015tinyoram,maas2013phantom} and as the basis
 for blockchain ledgers~\cite{cecchetti2017solidus} with reasonable overheads.
Most tree-based ORAMs  follow a 
similar structure: objects (usually key-value pairs) are mapped  to a
random leaf (or \textit{path}) in a binary tree and
physically reside (encrypted) in some tree node (or \textit{bucket}) along that path.
Objects are logically removed from the tree and
remapped to a new random path when accessed. These objects are eventually
flushed back to storage (according to their new path) as part of an \textit{eviction}
phase. Through careful scheduling, this write-back phase does not
reveal the new location of the objects; objects that cannot be flushed
are kept in a small client-side \textit{stash}.

\par \textbf{Ring ORAM}. \sys{} builds upon Ring ORAM~\cite{ren15ringoram}, a tree-based
ORAM with two appealing properties: a constant stash size and a fully
deterministic eviction phase.  \sys leverages these features for efficient
failure recovery.

As shown in Figure~\ref{fig:ringoramread}, server storage in Ring ORAM
consists of a binary tree of \textit{buckets}, each with a fixed
number $Z+S$ of \textit{slots}.  Of these, $Z$ are reserved for
storing actual encrypted data (\textit{real objects}); the remaining
$S$ exclusively store \textit{dummy objects}.  Dummy objects are
blocks of encrypted but meaningless data that appear indistinguishable
from real objects; their presence in each bucket prevent the server
from learning how many real objects the bucket contains and which slots contains them.  A random permutation (stored at the client) determines
the location of dummy slots.  In Figure~\ref{fig:ringoramread}, the
root bucket contains a real slot followed by
two dummy slots; the real slot contains the data object $a$; its left
child bucket instead contains dummy slots in positions one and
three, and an empty real slot in second position.

Client storage, on the other hand, is limited to \one~ a constant sized
\textit{stash}, which temporarily buffers objects that have yet to be
replaced into the tree and, unlike a simple cache, is essential
to Ring
ORAM's security guarantees; \two~the set of current
\textit{permutations}, which identify the role of each slot in each
bucket and record which slot have already been accessed (and 
marked {\em invalid)}; and \three~a \textit{position map}, which records the
random leaf (or \textit{path}) associated with every data object. In
Ring ORAM, objects are mapped to individual leaves of the tree but can
be placed in any one of the buckets along the path from the root to
that leaf. For instance, object $a$ in Figure~\ref{fig:ringoramread}
is mapped to \textit{path 4} but stored in the root bucket, while
object $b$ is mapped to \textit{path 2} and stored in the leaf bucket
of this path.
\begin{figure}[t]
\centering
\includegraphics[width=0.75\linewidth,valign=b]{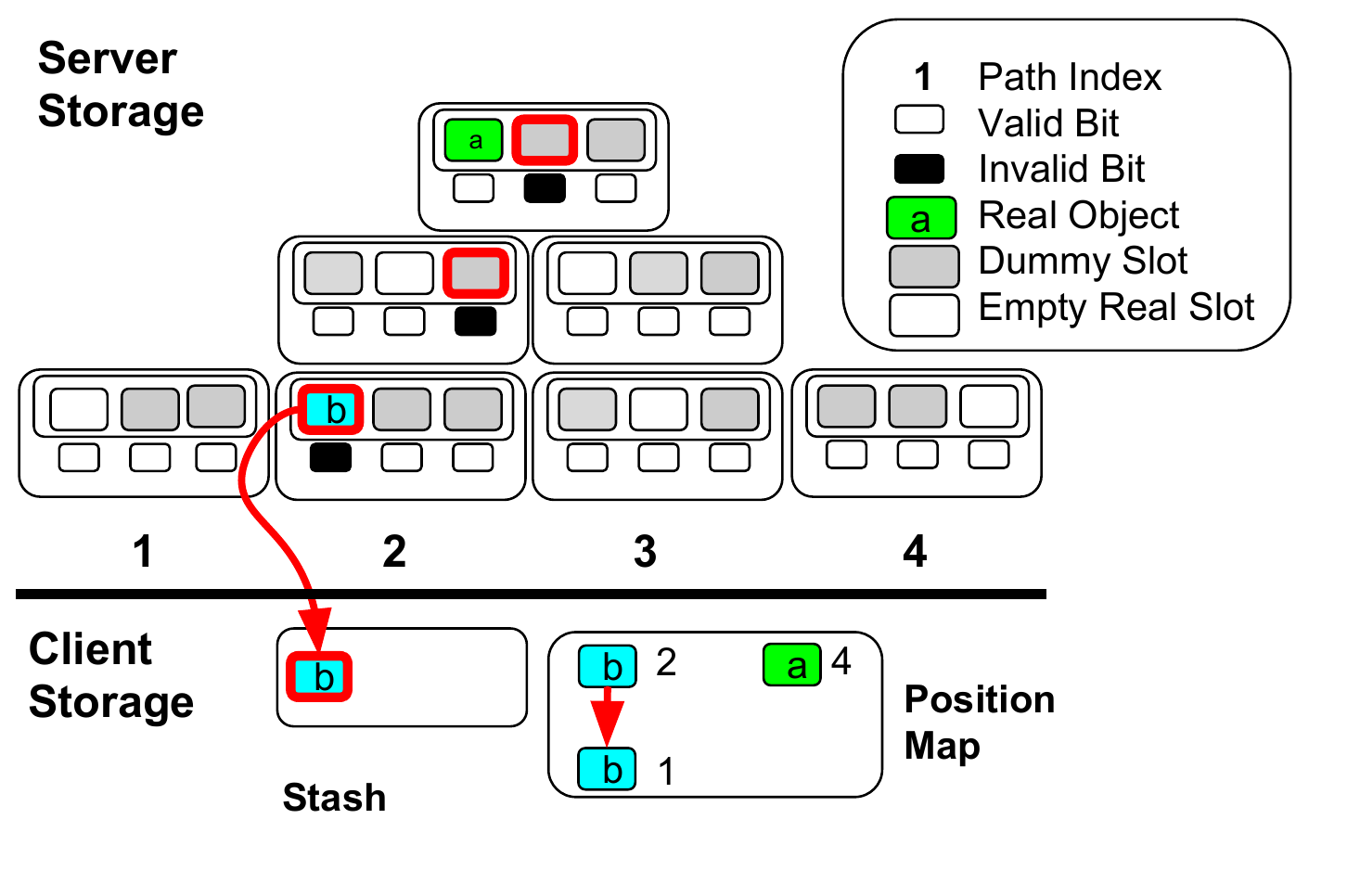}
    \caption{Ring ORAM - Read (Z=1, S=2)}%
\label{fig:ringoramread}
\end{figure}

Ring ORAM maintains two core invariants. First, each data object is
mapped to a new leaf chosen uniformly at random after every access, and is
stored either in the stash, or in a bucket on the path from the tree's
root to that leaf (\textbf{path invariant}).  Second, the physical
positions of the $Z+S$ dummy and real objects in each bucket are
randomly permuted with respect to all past and future writes to that
bucket (i.e., no slot can be accessed more than once between
permutations) (\textbf{bucket invariant}). The server never learns
whether the client accesses a real or a dummy object in the bucket, so
the exact position of the object along that path is never revealed.

Intuitively, the path invariant removes any correlation between two
accesses to the same object (each access will access independent
random paths), while the bucket invariant prevents the server from
learning when an object was last accessed (the server cannot
distinguish an access to a real slot from a dummy slot). Together,
these invariants ensure that, regardless of the data or type of
operation, all access patterns will look indistinguishable from a
random set of leaves and slots in buckets.

\par \textbf{Access Phase}. The procedures for read and write requests
is identical. To access an object $o$, the client first looks up $o$'s
path in the position map, and then reads one object from each bucket
along that path. It reads $o$ from the bucket in which it resides and
a valid dummy object from each other bucket, identified using its
local permutation map.  Finally, $o$ is remapped to a new path,
updated to a new value (if the request was a write), and added to the
stash; importantly, $o$ is not immediately written back out to cloud
storage.

Figure~\ref{fig:ringoramread} illustrates the steps involved in
reading an object $b$, initially mapped to path 2. The client reads a
dummy object from the first two buckets in the path (at slots two and
three respectively), and reads $b$ from the first slot of the bottom
bucket. The three slots accessed by the client are then marked as
invalid in their respective buckets, and $b$ is remapped to path 1.
To write a new  object $c$, the client would have to read three valid dummy
objects from a random path, place $c$ in the stash, and remap it to a
new path.

\par \textbf{Access Security}. Remapping objects to independent random
paths prevents the server from detecting repeated accesses to data,
while placing objects in the stash prevents the server from learning
the new path. Marking read slots as invalid forces every bucket access
to read from a distinct slot (each selected according to the random
permutation). The server consequently observes uniformly distributed
accesses (without repetition) independently of the contents of the
bucket. This lack of correlation, combined with the inability to
distinguish real slots from dummy slots, ensures that the server does
not learn if or when a real object is accessed.  Accessing dummy slots
from buckets not containing the target object (rather than real
slots), on the other hand, is necessary for efficiency: in combination
with Ring ORAM's  {\em eviction phase} (discussed next)  it lets the stash size remain constant  by preventing multiple real
objects from being addded to the stash on a single access.

\begin{figure}[t]
\centering
\includegraphics[width=0.75\linewidth,valign=b]{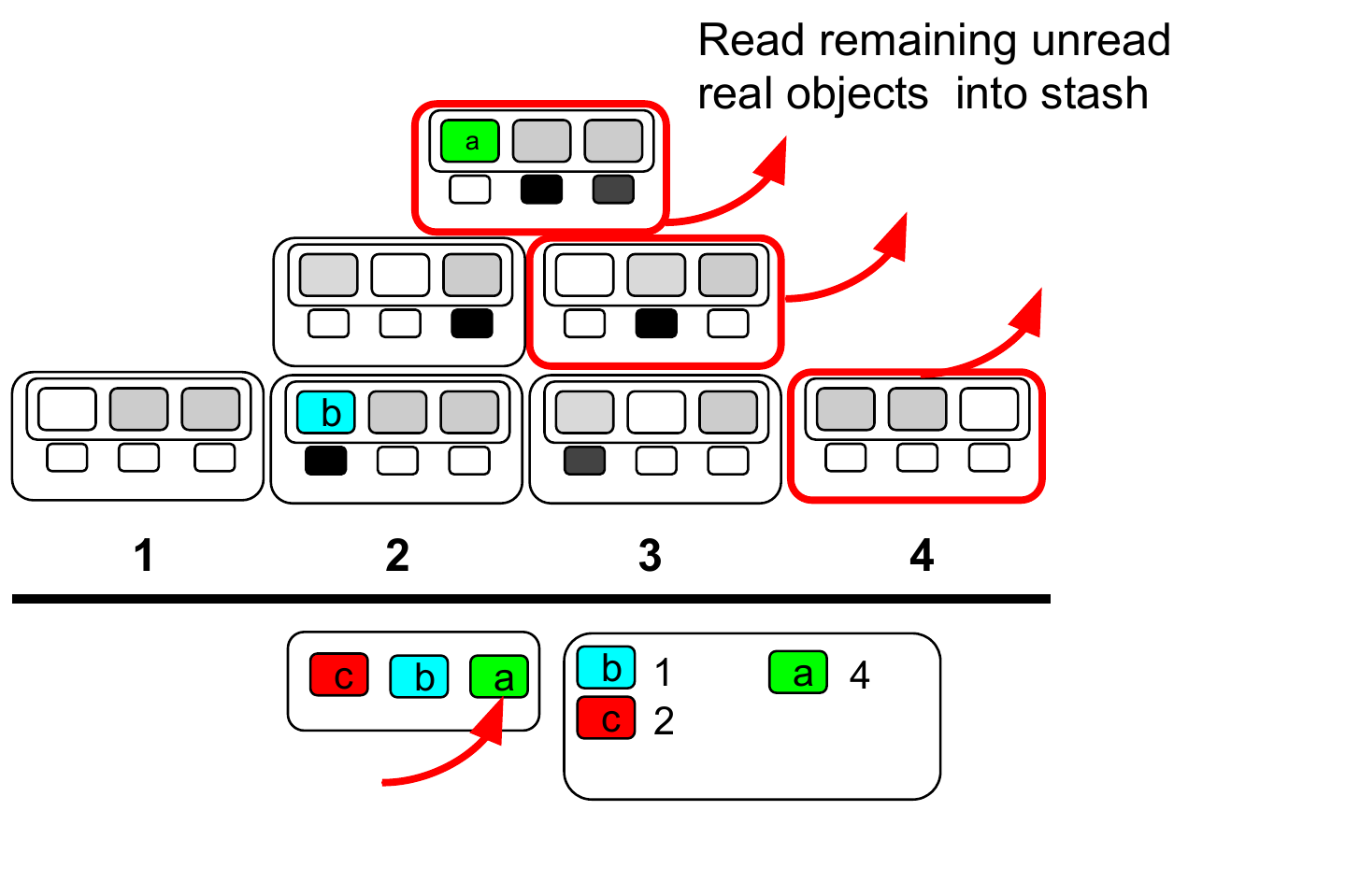}
\caption{Eviction - Read Phase}%
\label{fig:ringoramreadb}
\end{figure}

\par\textbf{Eviction Phase and Reshuffling}.
The aforementioned protocol falls short in two ways. First, if objects
are placed in the stash after each access, the stash will grow
unbounded. Second, all slots will eventually be marked as
invalid. Ring ORAM sidesteps these issues through two complementary
processes: \textit{eviction} and \textit{bucket reshuffling}. Every
$A$ accesses, the \emph{evict path} operation evicts objects
from the client stash to cloud storage.  It deterministically selects
a target path, flushes as much data as possible, and permutes each
bucket in the path, revalidating any invalid slots.  Evict path
consists of a read and write phase.  In the read phase, it retrieves
$Z$ objects from each bucket in the path: all remaining valid real objects, plus
enough valid dummies to reach a total of $Z$ objects read.  In the write
phase, it places each stashed object---including those read by the
read phase---to the deepest bucket on the target path that intersects
with the object's assigned path.  Evict path
  then permutes the real and dummy values in
each bucket along the target path, marking their slots as
\emph{valid}, and writes their contents to server storage.
Figure~\ref{fig:ringoramreadb} and~\ref{fig:ringoramwriteb} show the
evict path procedure applied to path 4.  In the read phase, evict path
reads the unread object $a$ from the root node and dummies from other
buckets on the path.  In the write phase
(Fig.~\ref{fig:ringoramwriteb}), $a$ is flushed to leaf 4, as its path
intersects completely with the target path.  Finally, we note that
randomness may cause a bucket to contain only invalid slots before its
path is evicted, rendering it effectively unaccessible. When this
happens, Ring ORAM restores access to the bucket by performing an
\textit{early reshuffle} operation that executes the read phase and
write phase of evict path only for the target bucket.

\par\textbf{Eviction Security}.
The read phase leaks no information about the contents of a given bucket. It systematically reads exactly $Z$ valid objects from the bucket,
selecting the valid real objects from the $z$ real objects in the bucket,
padding the remaining $Z-z$ required reads with a random subset of the $S$
dummy blocks. The random permutation and randomized encryption ensure that the server learns no information about how many real objects exist, and how many have been accessed. Similarly, the write phase hides the values and locations of objects written. At every bucket, the storage server observes only a newly encrypted and permuted set of objects, eliminating any correlation between past and future accesses to that bucket. Together, the read and write phases ensure that no slot is accessed
more than once between reshuffles, guaranteeing the bucket invariant.

\begin{figure}[t]
\centering
\includegraphics[width=0.75\linewidth,valign=b]{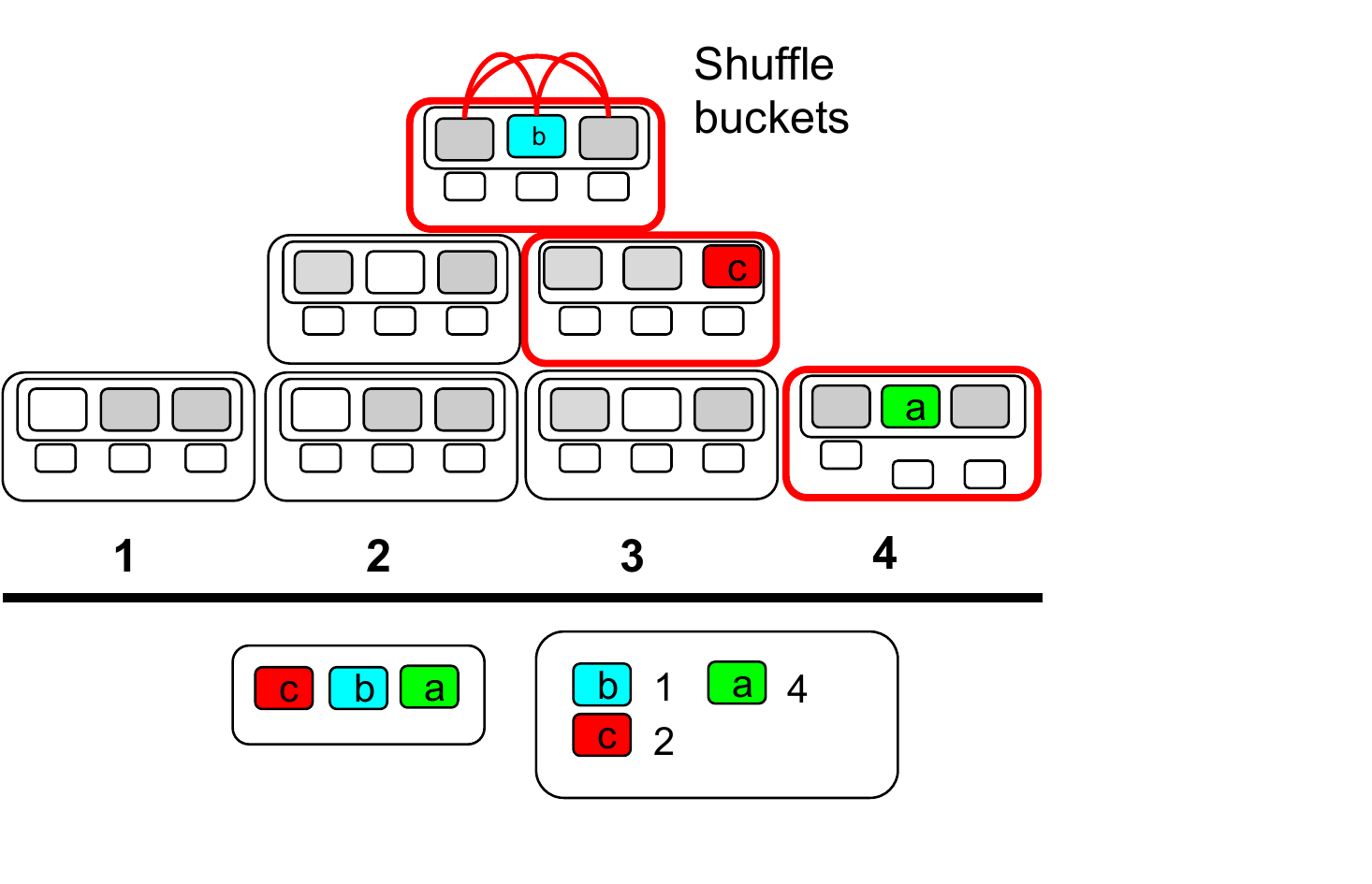}
\caption{Eviction - Write Phase}%
\label{fig:ringoramwriteb}
\end{figure}

Similarly, the eviction process leaks no information about the paths of the newly evicted objects: since all paths intersect at the root and the server cannot infer the contents of any individual bucket, any object in the stash may be flushed during \emph{any} evict path.
Moreover, since all paths intersect at the root, any object in the stash may be
flushed during \emph{any} evict path.

\section{System Architecture}
\label{sec:architecture}

\sys, like most privacy-preserving
systems~\cite{stefanov13oblivistore,williams12privatefs,sahin2016taostore}
consists of a centralized trusted component, the \emph{proxy},
that communicates with a fault-tolerant but untrusted entity,
\emph{cloud storage} (Figure~\ref{fig:sysarch}). The proxy handles concurrency control, while the
untrusted cloud storage stores the private data. \sys{} ensures that
requests made by the proxy to the cloud storage over the untrusted
network do not leak information.  We assume that the proxy
can crash and that when it does so, its state is lost. This two-tier design allows
applications to run a lightweight proxy locally and
delegate the complexity of fault-tolerance to cloud storage.

The proxy has two components: \one~a \emph{concurrency control unit}
and \two~a \emph{data manager} comprised of a \emph{batch manager} and
an \emph{ORAM executor}. The batch manager periodically schedules
fixed-size batches of client operations that the ORAM executor then
executes on a parallel version of Ring ORAM's algorithm.  The executor
accesses one of two units located on server storage: \emph{the ORAM
  tree}, which stores the actual data blocks of the ORAM; and
\emph{the recovery unit}, which logs all non-deterministic accesses to
the ORAM to a write-ahead log~\cite{mohan1992aries} to enable secure
failure recovery (\S\ref{sec:durability}).

\section{Proxy Design}
\label{sec:proxy}

The proxy in \sys{} has three goals: guarantee good performance, preserve
correctness, and guarantee security.
To meet these goals, \sys{} designs the proxy around the concept of
epochs.  The proxy partitions time into a set of fixed-length,
non-overlapping epochs. Epochs are the granularity at which \sys{}
guarantees durability and consistency. Each transaction, upon arriving
at the proxy, is assigned to an epoch and clients are notified of
whether a transaction has committed only when the epoch ends. Until
then, \sys{} buffers all updates at the proxy.

 This flexibility boosts \textit{performance} in
two ways. First, it allows \sys{} to implement a multiversioned
concurrency control (MVCC) algorithm on top of a single versioned
Ring ORAM. MVCC algorithms can significantly improve throughput by
allowing read operations to proceed with limited blocking. These
performance gains are especially significant in the presence of long-running
transactions or high storage access latency, as is
often the case for cloud storage systems. Second, it
reduces traffic to the ORAM, as only the database state at the
end of the epoch needs to be written out to cloud storage.

\begin{figure}
    \centering
\includegraphics[width=\linewidth]{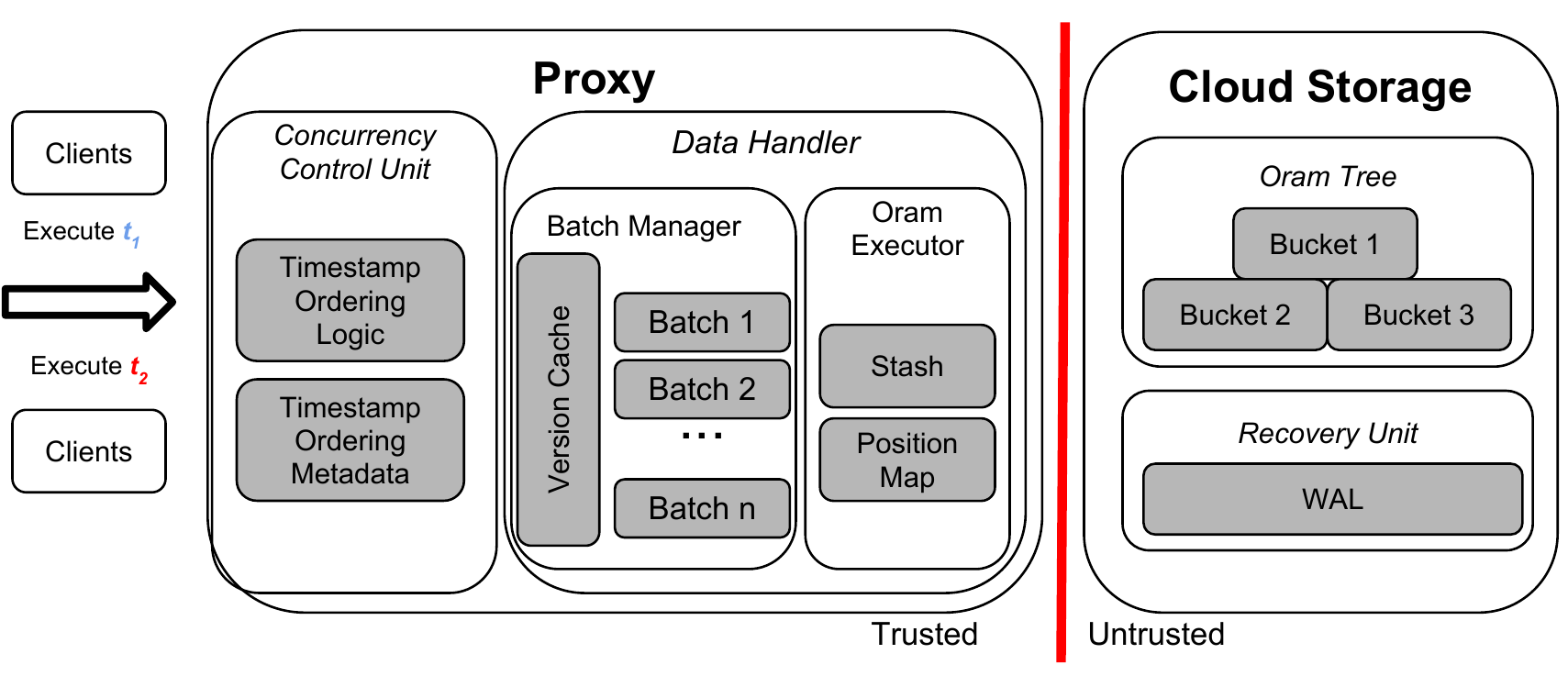}
    \caption{System Architecture}
    \label{fig:sysarch}
\end{figure}

Importantly,  \sys's choice to enforce consistency and
durability only at epoch boundaries does not affect \textit{correctness};
transactions continue to observe a serializable and recoverable
schedule (i.e., committed transactions do not see  writes from 
aborted transactions). 

For transactions executing concurrently within the same epoch,
serializability is guaranteed by concurrency control;
transactions from different epochs are naturally serialized by the
order in which the proxy executes their epochs. No transaction can
span multiple epochs; unfinished transactions at epoch boundaries are
aborted, so that no transaction is ongoing during epoch
changes. 

Durability is instead achieved by enforcing epoch
fate-sharing~\cite{tu2013silo} during proxy or client crashes: \sys{}
  guarantees that either all \textit{completed} transactions
  (i.e., transactions for which a commit request
  has been received) in the
  epoch are made durable or all transactions abort. This way, no
  \textit{committed} transaction can ever observe
  non-durable writes.

  Finally, the deterministic pattern of execution that epochs impose
  drastistically simplifies the task of guaranteeing workload
  independence: as we describe further below, the frequency and timing
  at which requests are sent to untrusted storage are fixed and
  consequently independent of the workload.

  The proxy processes epochs with two modules: the concurrency
  control unit (CCU) ensures that execution remains serializable, while 
  the data handler (DH) accesses the actual data objects. We
  describe each in turn.

\subsection{Concurrency Control}
\label{subsec:proxy:cc}

\sys{}, like many existing commercial databases~\cite{innodb,postgres},
uses multiversioned concurrency control~\cite{bernstein1983mcc}.
\sys specifically chooses multiversioned timestamp ordering
(MVTSO)~\cite{bernstein1983mcc,reed1983atomic} because it allows
uncommitted writes to be immediately visible to concurrently executing
transactions. To ensure serializability,  transactions log the set of transactions
whose uncommitted values they have observed (their write-read
dependencies) and  abort if any of their dependencies fail to
commit. This optimistic approach is critical to \sys's performance: it allows transactions
within the same epoch to see each other's effects even as
\sys delays commits until the epoch ends.  In contrast, a pessimistic
protocol like two-phase locking~\cite{eswaran1976ncp}, which precludes
transactions from observing uncommitted writes, would artificially increase
contention by holding exclusive write-locks for the duration of an
epoch.   When a transaction starts, MVTSO assigns it  a
unique timestamp that determines its serialization order.  A write
operation creates a new object version marked with its transaction's
timestamp and inserts it in the \textit{version chain} associated with
that object. A read operation returns the object's latest version with a
timestamp smaller than its transaction's timestamp. Read operations
further update a \textit{read marker} on the object's version chain with their
transaction's timestamp. Any write operation with a smaller timestamp
that subsequently tries to write to this object is aborted, ensuring
that no read operation ever fails to observe a write from a transaction that should
have preceded it in the serialization order.

Consider for example the set of transactions executing in Figure~\ref{fig:batchinglogic}.
Transaction $t_1$'s update to object $a$ ($w(a_1)$) is
immediately observed by transaction $t_3$ ($r_3(a_1)$). $t_3$ becomes 
dependent on $t_1$ and can only commit once $t_1$ also
commits. In contrast, $t_2$'s write to object $d$ causes
$t_2$ to abort: a transaction with a higher timestamp ($t_3$)
had already read version $d_0$, setting the version's read marker to $3$.

\begin{figure}
\centering
\includegraphics[width=\linewidth]{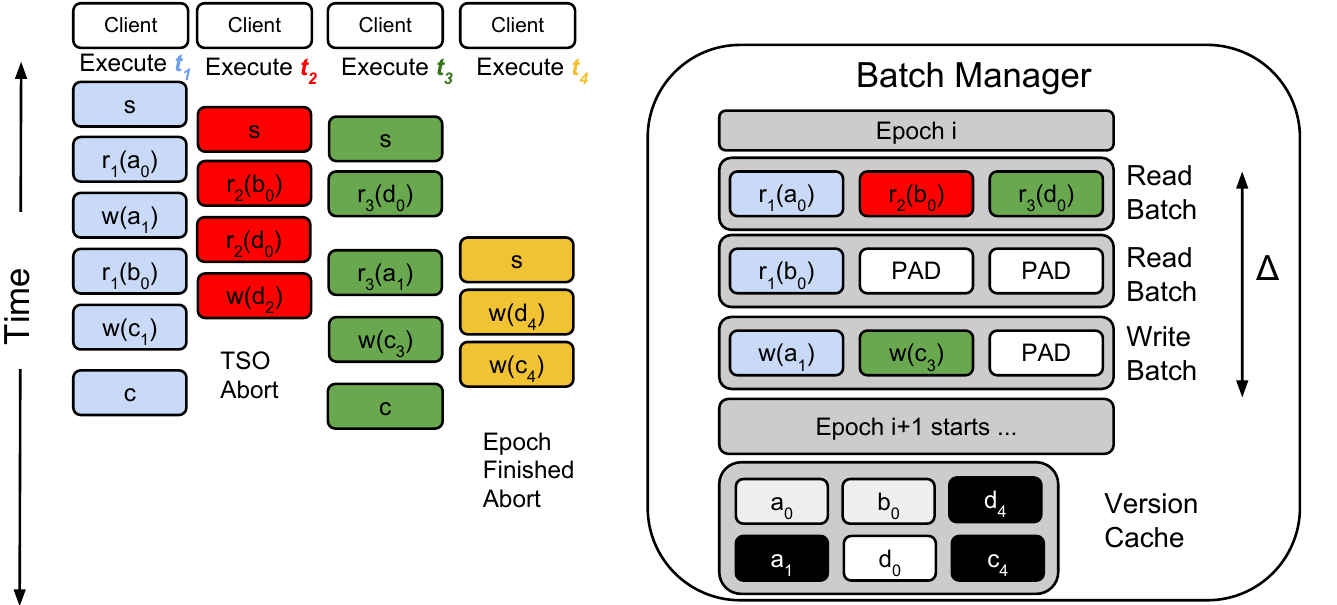}
    \caption{Batching Logic - $r_x(a_y)$ denotes
    that transaction $t_x$ reads the version of object
    $a$ written by transaction $t_y$}
\label{fig:batchinglogic}
\end{figure}

\subsection{Data Handler}
\label{subsec:proxy:dh}

Once a version is selected for reading or writing, the DH becomes
responsible for accessing or modifying the actual object.  Whereas it
suffices to guarantee durability and consistency only at epoch boundaries,
security must hold at all times, posing two key challenges. First, the
number of requests executed in parallel can leak information, e.g.,
data dependencies within the same
transaction~\cite{bindschaedler2015curious,sahin2016taostore}.
Second, transactions may abort (\S\ref{subsec:proxy:cc}), requiring
their effects to be rolled back without revealing the existence of
contended objects~\cite{arasu2013cipherbase,sheff2016sss}.  To
decouple the demands of these workloads from the timing and set of
requests that it forwards to cloud storage, \sys{} leverages the
following observation: transactions can always be re-organized so that all
reads from cloud storage execute before all
writes~\cite{kung1981occ,zhang2015tapir,mehdi17occult,corbett2013spanner}.
Indeed, while operations within a transaction may depend on the data
returned by a read from cloud storage, no operation
depends on the execution of a
write.
Accordingly, \sys{} organizes the DH into a read phase and a write
phase: it first reads all necessary objects from
cloud storage, before applying all writes.

\par \textbf{Read Phase}. \sys{} splits each epoch's read phase into a
\textit{fixed} set of $R$ \textit{fixed-sized} read batches ($b_{read}$) that are
forwarded to the ORAM executor at \textit{fixed}
intervals ($\Delta_{epoch}$). This deterministic structure allows
\sys{} to execute dependent read operations without revealing the
internal control flow of the epoch's transactions. 
Read operations are assigned to the epoch's next unfilled read
batch. If no such batch exists, the transaction is
aborted.
Conversely, before a batch is forwarded to the ORAM executor,
all remaining empty slots are padded with dummy requests.
\sys{} further \textit{deduplicates} read operations that access the
same key. As we describe in \S\ref{sec:parallel}, this step is necessary for security
since parallelized batches may leak information unless
requests all access distinct
keys~\cite{boyle16pram,williams12privatefs}. Deduplicating requests
also benefits performance by increasing the number of operations
that can be served within a fixed-size batch.

\par \textbf{Write Phase}. While transactions execute, \sys{}
buffers their write operations into a \textit{version cache}
that maintains all object versions created by transactions in the epoch.
At the end of an epoch, transactions that have yet to finish executing (recall that epochs terminate
at fixed intervals) are aborted and their operations are removed. The latest versions
of each object in the version cache according to the version chain are then aggregated in a fixed-size \textit{write
batch} ($b_{write}$) that is forwarded to the ORAM executor, with additional padding if
necessary. 

This entire process, including write buffering and deduplication, must
not violate serializability.  The DH guarantees that write buffering respects
serializability by directly serving reads from the
  version cache for objects modified in the current
  epoch. It guarantees
  serializability in the presence of duplicate requests by 
  only including
   the last write of the version chain in a write batch.
   Since \sys's epoch-based design guarantees that transactions from a later
  epoch are serialized after all transactions from an earlier epoch,
  intermediate object versions can be safely discarded.
  In
  this context, MVTSO's requirement that transactions observe the
  \textit{latest} committed write in the serialization order reduces 
  to transactions reading the tail of the previous epoch's version
  chain. 

In the presence of failures, \sys{} guarantees serializability
and recoverability by enforcing epoch fate sharing: either all
transactions in an epoch are made durable or none are. If a failure
arises during epoch $e_i$, the system simply recovers to epoch
$e_{i-1}$, aborting all transactions in epoch $e_i$. Once again, this
flexibility arises from \sys{} delaying commit notifications until epoch
boundaries.

\par \textbf{Example Execution}. We illustrate the batching logic once
again with the help of Figure~\ref{fig:batchinglogic}. Transactions
$t_1$,$t_2$,$t_3$ first execute read operations. These operations are
aggregated into the first read batch of epoch $i$. The values returned
by these reads are then \textit{cached} into the version cache.  $t_2$
then executes a write operation, which \sys{} also buffers into the
version cache. When executing $r_2(d_0)$), $t_3$ reads object $d$ directly
from the version cache (we discuss the security of this step in
the next section). Similarly, $r_1(a_1)$ reads the buffered
uncommitted version of $a$. In contrast, \sys{} schedules $r_1(b_0)$
to execute as part of the next read batch as $b_0$ is not present in
the version cache. The read batch is then padded to its fixed
$b_{read}$ size and executed.  $t_4$ contains no read operations: its
write operations are simply executed and buffered at the version
cache.  \sys{} then finalizes the epoch by aborting all transactions
(and their dependencies) that have not yet finished executing: $t_4$
is consequently aborted. Finally, \sys{} aggregates the last version
of every update into the write batch (skipping version $c_1$ of object
$c$ for instance, instead only writing $c_2$), before notifying
clients of the commit decision.

\subsection{Reducing Work}

\sys{} reduces work in two additional ways: it caches reads within an
epoch and allows Ring ORAM to execute write operations without also
executing dummy queries. While these optimizations may appear
straightforward, ensuring that they maintain workload independence requires care. 

\par \textbf{Caching Reads}. Ring ORAM maintains a client-side stash
(\S\ref{sec:background}) that stores ORAM blocks until their eviction
to cloud storage. Importantly, a request for a block present in the stash
still triggers a dummy request: a dummy object is
still retrieved from each bucket along its path. While this
access may appear redundant at first, it is in fact
necessary to preserve \textit{workload independence}: removing
it removes the guarantee that the set of paths that \sys{} requests
from cloud storage is uniformly distributed. In particular, blocks present in the stash are more
likely to be mapped to paths farther away from
the one visited by the last evict path, as 
they correspond to paths that could not be flushed: buckets have limited
space for real blocks and blocks mapped to paths that only intersect near the
top of the tree are less likely to find a free slot to which
they can be flushed.
The degree to which
this effect skews the distribution leaks information about the stash
size, and, consequently, about the workload. To illustrate, consider the execution
in Figure~\ref{fig:caching}. Objects mapped to paths 1 and 2 ($a$, $b$, and $f$) were not
flushed from the stash in the previous eviction of path 4. When these objects
are subsequently accessed, naively reading them from the stash without performing
dummy reads skews the set of paths accessed toward the right subtree (paths 3 and 4)

\sys{} securely mitigates some of this work by drawing a novel
distinction between objects that are in the stash as a result of a logical access
and those present because they could not be evicted.
The former can be safely accessed without performing
a dummy read, while the latter cannot.
Objects present in the stash following a logical access are mapped to independently
uniformly distributed paths. Ring ORAM's
path invariant ensures that, without caching, 
the set of accessed paths is uniformly distributed. Removing an
independent uniform subset of those paths (namely, the dummy requests)
will consequently not change the distribution. Thus, 
caching these objects, and filling out a read batch with other real or dummy requests, 
preserves the uniform distribution of paths and leaks no information.
\sys consequently allows all read objects to be placed in the version cache for the duration of the epoch.
Objects $a$, $b$, $d$ are, for instance, placed in the version cache in Figure~\ref{fig:batchinglogic},
allowing read $r_2(d_0)$ to read $d$ directly from the cache.
In contrast, objects
present in the stash because they could not be evicted are mapped
to paths that skew away from the latest evict path. Caching these objects would
consequently skew the distribution of requests sent to the storage away from 
a uniform distribution, as illustrated in Figure~\ref{fig:caching}.

\begin{figure}[!t]
\centering
\includegraphics[width=\linewidth,valign=b]{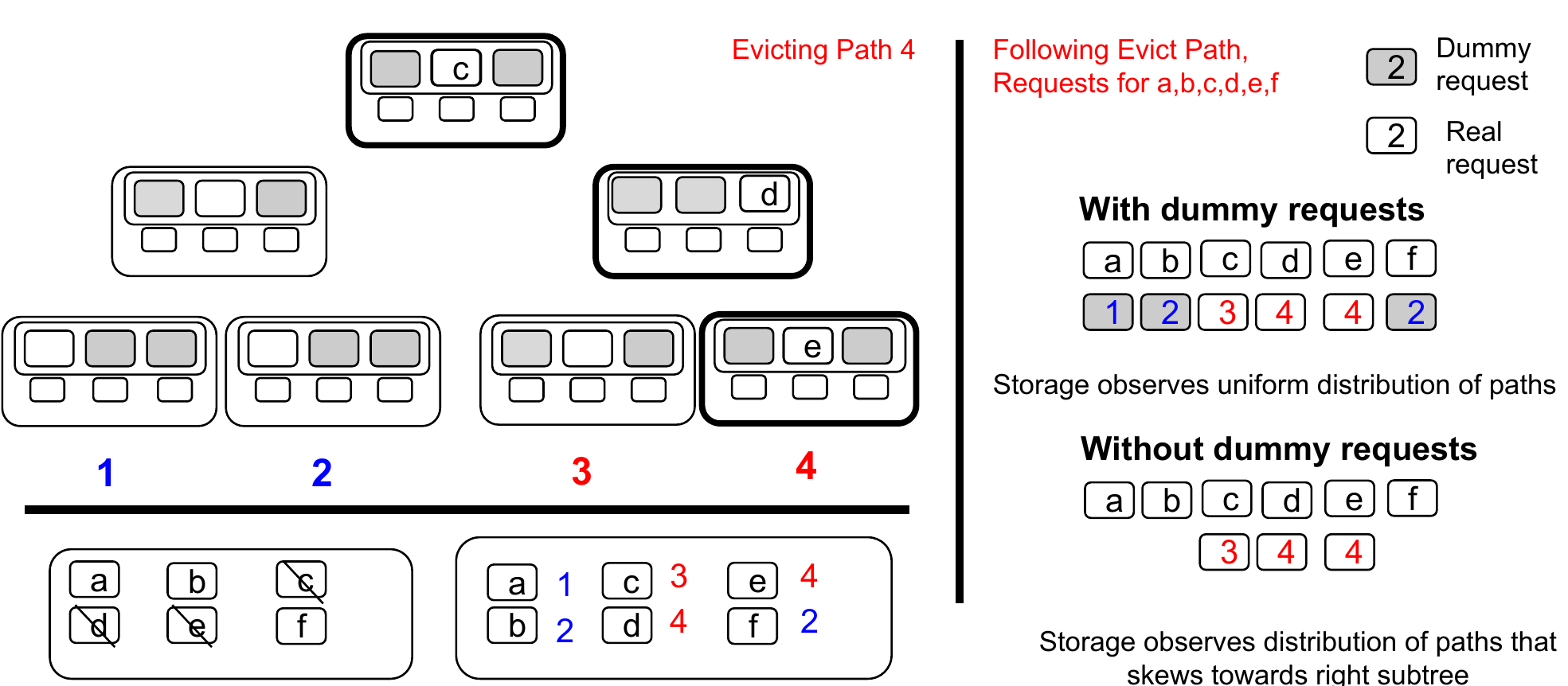}
    \caption{Skew introduced by caching arbitrary objects}%
\label{fig:caching}
\end{figure}

\par\textbf{Dummiless Writes}.
Ring ORAM must hide whether requests correspond to read or write
operations, as the specific pattern in which these operations are
interleaved can leak information~\cite{zheng17opaque}; that is why
Ring ORAM executes a read operation on the ORAM for every access. In
contrast, since transactions can always perform all reads before all
writes, no information is leaked by informing the storage server that
each epoch consists of a fixed-size sequence of 
potentially dummy reads followed by a fixed-size
sequence of potentially dummy writes.
\sys thus modifies Ring ORAM's algorithm to directly
place the new version of an object in the stash, without executing the
corresponding read. Note, though, that \sys continues to increment the evict
path count on write operations, a necessary step  to preserve
the bounds on the stash size, which is important for durability (\S\ref{sec:durability}).

\subsection{Configuring \sys{}}
\label{subsec:configure}

\sys{}'s good performance hinges on appropriately configuring
the size/frequency of batches and ORAM tree for a target
application. Table~\ref{table:config} summarizes the parameter space.

\par \textbf{Ring ORAM}. Configuring Ring ORAM first requires choosing an appropriate $Z$ parameter. Larger 
values of $Z$ reduce the total size of the ORAM on cloud-storage by decreasing the required height
of the ORAM tree and decrease eviction 
frequency (reducing network/CPU overhead). In contrast, this increase the maximum stash size.
Traditional ORAMs thus choose the largest value of Z for which the stash
size fits on the proxy. \sys{} adds an additional consideration: for durability 
(as we describe in \S\ref{sec:durability}), the stash must be synchronously written out 
every epoch. One must thus take into account the throughput loss associated
with the stash writeback time.  Given an appropriate value of Z, \sys{} then chooses
L, S, and A according to the analytical model proposed in \cite{ren15ringoram}.

\begin{table}[t]
    \centering
\begin{tabular}{|c|c|}
\hline
   $N$ & Number of real objects \\
\hline
$Z$ & Number of real slots \\
\hline
$S$ & Number of dummy slots \\
\hline
$A$ & Frequency of evict path \\
\hline
$L$ & Number of levels in the ORAM tree \\
\hline
$R$ & Number of read batches \\
\hline
$b_{read}$ & Size of a read batch \\
\hline
$b_{write}$ & Size of a write batch \\
\hline
$\Delta$ & Batch frequency \\
\hline
\end{tabular}
\caption{\sys{}'s configuration parameters}
\label{table:config}
\end{table}

\par \textbf{Epochs and batching}. Identifiying the appropriate size and number of batches hinges on several
considerations. First,  \sys{} must provision sufficiently many read batches ($R$) to handle control flow dependencies within transactions. A transaction that executes in sequence five dependent read operations, will for instance
require five read batches to execute (it will otherwise repeatedly abort).
Second, the ratio of reads ($R * b_{read}$) to
writes ($w_{write}$) must closely approximate the application's read/write ratio.
An overly large write batch will waste resources as it will be padded with
many dummy requests. A write batch that is too small will lead to frequent aborts caused by the batch filling up.  Third,
the size of a read or write batch (respectively
$b_{read}$ and $b_{write}$) defines the
degree of parallelism that can be extracted. The desired batch size is
thus a function of the concurrent load of the system, but also of
hardware considerations, as increasing
parallelism beyond an I/O or CPU bottleneck serves no purpose. Finally, the number and frequency of read batches within an epoch
increases overall latency, but reduces amortized resource costs through
caching and operation pipelining (introduced in
\S\ref{sec:parallel}). Latency-sensitive applications may favor smaller
batch sizes, while others may prefer longer epochs, but lower overheads.

\par \textbf{Security Considerations}. \sys{} does not attempt to hide the size and frequency of batches from the storage
server (we formalize this leakage in \S\ref{sec:security}). Carefully tuning the size and frequency of batches to best match a given application
may thus leak information about the application itself. An OLTP application, for instance,
will likely have larger batch sizes ($b_{read}$), but fewer read batches ($R$), as OLTP applications
sustain a high concurrent load of fairly short transactions. OLAP applications will
prefer small or non-existent write batches ($b_{write}$), as they are predominantly
read-only, but require many read batches to support the complex joins/aggregates
that they implement. 
\sys{} does not attempt to hide the type of application that is being run.
It does, however, continue to hide what data is being accessed and what transactions are
currently being run at any given point in time. While \sys's configuration parameters may,
for instance, suggest
that a medical application like FreeHealth is being run, they do not in any way leak information
about how, when, or which patient records are being accessed.

\section{Parallelizing the ORAM}
\label{sec:parallel}

Existing ORAM constructions make limited use of parallelism. Some
allow requests to execute concurrently between eviction or shuffle
phases~\cite{boyle16pram,williams12privatefs,sahin2016taostore}, while
others target intra-request parallelism to speed up execution of a
single request~\cite{lorch2013shroud}.  \sys{} explicitly targets both
forms of parallelism. Parallelizing Ring ORAM presents three
challenges: \one preserving the correct abstraction of a sequential
datastore, \two enforcing security by concealing the position of real
blocks in the ORAM (thereby maintaining workload independence), and
\three preserving existing bounds on the stash size. While these
issues also arise in prior work~\cite{sahin2016taostore}, the
idiosyncrasies of Ring ORAM add new dimensions to these challenges.

\par \textbf{Correctness}.
\sys{} makes
two observations. First, while all operations 
conflict at the Ring ORAM tree's root, they 
can be split into  suboperations that
access mostly disjoint \textit{buckets} (\S\ref{sec:background}). Second, 
conflicting bucket operations can be further parallelized by
distinguishing accesses to the bucket's metadata from those to 
its physical data blocks.

\sys{} draws from the theory of multilevel
serializability~\cite{weikum1991multilevel}, which guarantees that an
execution is serializable if the system enforces level-by-level
serializability: if operation $o$ is ordered before $o'$ at level $i$,
all suboperations of $o$ must precede conflicting suboperations of
$o'$. Thus, if \sys{} orders conflicting operations at a level $i$, it
enforces the same order at level $i+1$ for all their conflicting
suboperations; conversely, if two operations do not conflict at level
$i$, \sys{} executes their suboperations in parallel.  To this end,
\sys{} simply tracks dependencies across operations and orders
conflicting suboperations accordingly.  \sys{}
extracts further parallelism in two ways. First, since in Ring ORAM
reads to the same bucket between consecutive
eviction or reshuffling operations always target different physical
data blocks (even when bucket operations conflict on metadata access),
\sys{} executes them in parallel. Second, \sys's own batching logic
ensures that requests within a batch touch different objects,
preventing read and write methods from ever conflicting.  Together,
these techniques allow \sys{} to execute most requests and evictions
in parallel.

We illustrate the dependency
tracking logic in Figure~\ref{fig:multilevel}. The read operation
to path 1 conflicts with the evict path for path 2, but only at the root (bucket 1).
Thus, reads to buckets 2 and 3 can proceed concurrently, even though accesses to the root's metadata must be
serialized, as both operations update the bucket access counter
and valid/invalid map (\S\ref{sec:background}).

\begin{figure}%
\centering
    \includegraphics[width=\linewidth]{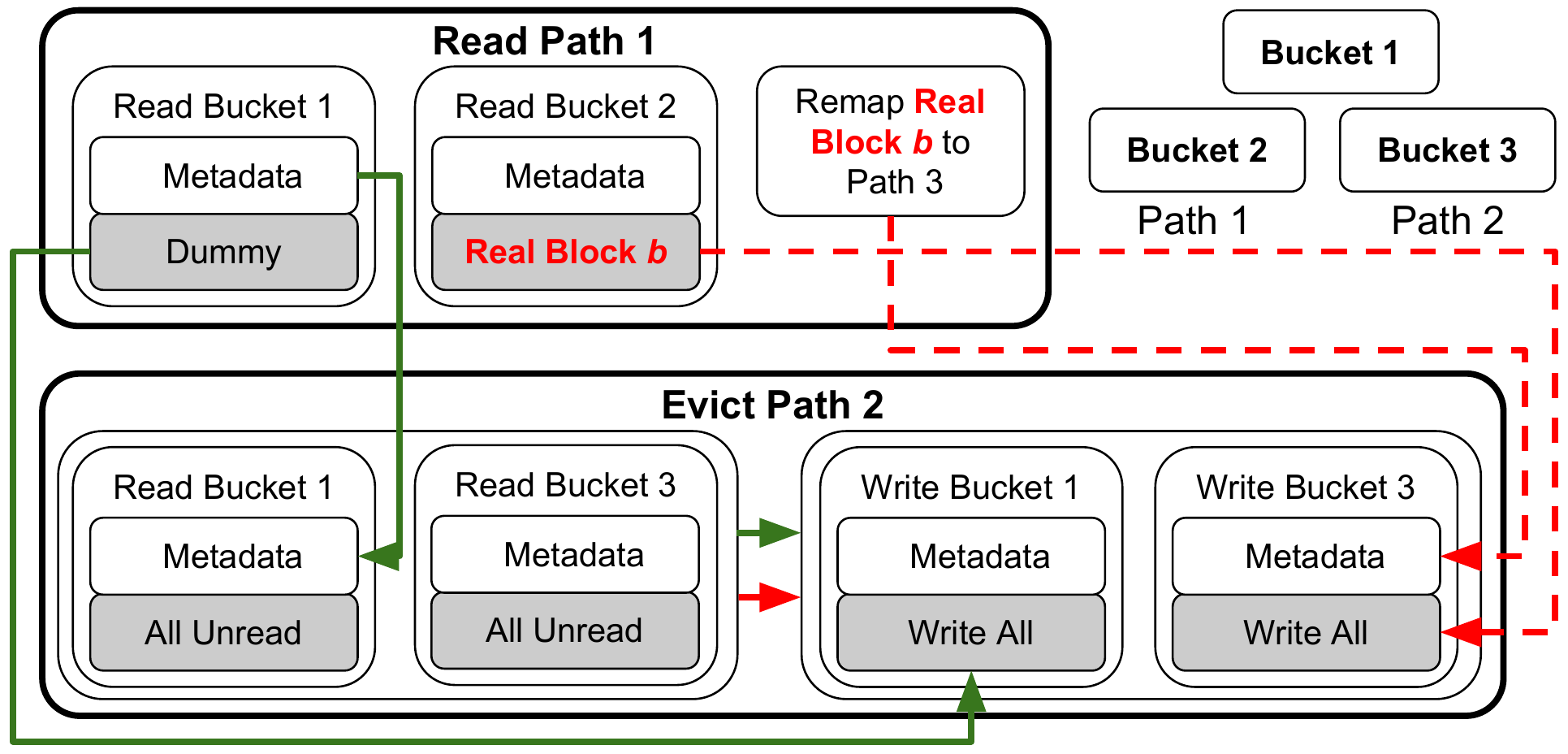}
    \caption{Multilevel Pipelining for a read of path 1 and an evict path of path 2
    executing in parallel. Solid green lines represent physical
    dependencies and dashed red lines represent data dependencies.
    Inner blocks represent nested operations}
    \label{fig:multilevel}
\end{figure}

\par \textbf{Security}.  For security, \sys's parallel evict path
operation must flush the same blocks flushed by a sequential
implementation. Reproducing this behavior without sacrificing
parallelism is challenging.  It requires that all real objects brought
in during the last $A$ accesses be present in the stash when data is
flushed, which may introduce \emph{data dependencies}. Unlike
dependencies that arise between operations that access the same
physical location in cloud storage, these dependencies are not
a deterministic function of an epoch's operations already known to the
adversary.

Consider, for instance, block $b$ in Figure~\ref{fig:multilevel}.  In
a sequential implementation, $b$ would enter the stash as a result of
reading path 1 and be flushed to bucket 3 by the following evict path.
Thus, evict path would have to \emph{wait} until $b$ is placed in the
stash.  Honoring these dependencies opens a timing channel:  delay
in flushing certain blocks can reveal object 
placement.  
As blocks holding real objects can exist anywhere in the tree and be
remapped to any path,  it follows that it  is never secure to execute
an eviction operation until all previous access operations have
terminated.

\sys mitigates this restriction by again leveraging delayed visibility
and the idea to separate read and write operations within an
epoch---but with an important difference. In \S\ref{subsec:proxy:dh}
the proxy created separate batches for {\em logical} read and write
operations; to improve parallelism, \sys{}, expanding on an idea used
by Shroud~\cite{lorch2013shroud}, assigns to separate phases within an
epoch the {\em physical} read and write operations that underlie each
of those logical operations. The read phase computes all necessary metadata and executes the set of
physical read operations for all logical read path, early reshuffle,
and evict path operations. This set is workload independent, so its
operations need not be delayed.  Physical writes, however, are only
flushed at the end of an epoch.  The proxy can again apply write
deduplication: if a bucket is
repeatedly modified during an epoch, only the last version must be
written back.  Reads that should have read an intermediate write are
served locally from the buffered buckets.

The adversary thus always observes a set of reads to random paths
followed by a deterministic set of writes independent of the
contents of the ORAM and, consequently, of the workload. Data
dependencies between read and evict operations no longer create a
timing channel. Meanwhile parallelism remains high, as the physical
blocks accessed in each phase are guaranteed to be distinct---Ring
ORAM directly guarantees this for reads, while bucket deduplication
does it for writes.

 \section{Durability}
\label{sec:durability}

\sys{} guarantees durability at the granularity of epochs: after a
crash, it recovers to the state of the last failure-free epoch. \sys
adds two demands to the need of recovering to a consistent state:
recovery should leak no information about past or future transactions,
and it should be efficient, accessing minimal data from cloud storage.
\sys{} guarantees the former by ensuring that recovery logic and data
logged for recovery maintain workload independence
(\S\ref{sec:threat}). It strives towards the latter by leveraging the
determinism of Ring ORAM.

\par \textbf{Consistency}. \sys{} recovery logic relies on two
well-known techniques: write-ahead logging~\cite{mohan1992aries} and
shadow paging~\cite{gray1981shadow}.  \sys{} mandates that
transactions be durable only at the end of an epoch; thus, on a proxy
failure, all ongoing transactions can be aborted, and the system
reverted to the previous epoch.  To make this possible, \sys{} must
~\one recover the proxy metadata lost during the proxy crash,
and ~\two ensure that the ORAM does not contain any of the aborted
transactions' updates.  To recover the metadata, \sys{} logs three
data structures before declaring the epoch committed: the position
map, the permutation map, and the stash.  The position map and the
permutation map identify the position of real objects in the ORAM tree
(respectively, in a path and in a bucket); logging them prevents the
recovery logic from having to scan the full ORAM to recover the
position of buckets. Logging the stash is necessary for
correctness.  As eviction may be unable to flush the entire stash,
some newly written buckets may be present only in the stash, even at
epoch boundaries. Failing to log the stash could thus lead to data
loss.

To undo partially executed transactions, \sys{} adapts the traditional
copy-on-write technique of shadow paging~\cite{gray1981shadow}: rather
than updating buckets in place, it creates new versions of each bucket
on every write. \sys{} then leverages the inherent determinism of Ring
ORAM to reconstruct a consistent snapshot of the ORAM at a given
epoch. In Ring ORAM, the current version of a bucket (i.e. the number
of times a bucket has been written) is a deterministic function of the
number of prior evict paths.  The number of evict paths per epoch is
similarly fixed (evict paths happen every $A$ accesses, and epochs are
of fixed size). \sys{} can then trivially revert the ORAM on failures
by setting the evict path counter to its value at the end of the last
committed epoch. This counter determines the number of evict paths
that have occurred, and consequently the object versions of the
corresponding epoch.

\par \textbf{Security}. \sys{} ensures that \one~the information
logged for durability remains independent of data accesses, and
\two~that the interactions between the failed epoch, the recovery
logic, and the next epoch preserve workload independence.

\sys{} addresses the first issue by encrypting the position map and
the contents of the permutations table. It similarly encrypts the
stash, but also \textit{pads} it to its maximum size, as determined in
canonical Ring ORAM~\cite{ren15ringoram}, to prevent it from
indicating skew (if a small number of objects are accessed frequently,
the stash will tend to be smaller).

The second concern requires more care: workload independence must hold
before, during, and after failures. Ring ORAM guarantees workload
independence through two invariants: the bucket invariant and the path
invariant (\S\ref{sec:background}). Preserving bucket slots from
being read twice between evictions is straightforward. \sys{} simply
logs the invalid/valid map to track which slots have already been read
and recovers it during recovery; there is no need for encryption, as
the set of slots read is public information.
Ensuring that the ORAM continues to observe a uniformly distributed set of paths is instead
more challenging. Specifically, read requests from partially executed transactions
can potentially leak information, even when recovering to the previous
epoch.  Traditionally, databases simply \textit{undo} 
partially executed transactions, mark them as aborted, and proceed as
if they had never existed. From a security standpoint, however, these
transactions were still observed by the adversary, and thus may leak
information.  Consider a transaction accessing object $a$ (mapped to
path $1$) that aborts because of a proxy failure. Upon recovery, it is
likely that a client will attempt to access $a$ again.  As the
recovery logic restores the position map of the previous epoch, that
new operation on $a$ will result in another access to path $1$,
revealing that the initial access to path $1$
was likely real (rather than padded), as the probability of collisions
between two uniformly chosen paths is  low.  To mitigate this
concern while allowing clients to request the same objects after
failure, \sys{} durably logs the list of paths and slot indices that
it accesses, before executing the actual requests, and replays
those paths during recovery (remapping any real blocks). While this
process is similar to traditional database redo
logging~\cite{mohan1992aries}, the goal is different. \sys{} does not
try to reapply transactions (they have all aborted), but instead
forces the recovery logic to be deterministic: the adversary 
always sees the paths from the aborted epoch repeated after a failure.

\par \textbf{Optimizations}. To minimize the overhead of checkpointing,
\sys{} checkpoints deltas of the position, permutation, and valid/invalid
map, and only periodically checkpoints the full data structures. 
While
the number of changes to the permutation and valid/invalid maps 
directly follows from the set of
physical requests made to cloud storage, the size of the delta for the
position map reveals how many real requests were
included in an epoch---padded requests do not lead to position map
updates. \sys{} thus pads the map delta to the maximum number of
entries that could have changed  in an epoch (i.e., the read batch
size times the number of read batches, plus the size of the
single write batch).

\section{System Security}
\label{sec:security}

We now outline \sys's security guarantees, deferring a formal treatment to
Appendix~\ref{sec:formal-security}.
To the best of our knowledge, we are the first to formalize the notion of
crashes in the context of oblivious RAM.

\par\textbf{Model}
We express our security proof within the Universal Composability (UC)
framework~\cite{canetti01}, as it aligns well with the needs of modern
distributed systems: 
a UC-secure system remains UC-secure under concurrency or if
composed with other UC-secure systems.
Intuitively, proving security in the UC model proceeds as
follows. First, we specify an \emph{ideal functionality} \idealF
that defines the expected functionality of the protocol for both correctness and security.
For instance, \sys requires that the execution
be serializable, and that only the frequency of read and write batches
be learned. We must
ensure that the real protocol provides the same functionality to
honest parties while leaking no more information than
\idealF would.
To establish this, we consider two different worlds: one where the real
protocol interacts with an adversary \adv, and one where \idealF interacts with
\sdv[\adv], our best attempt at simulating \adv.
\adv's transcript---including its inputs, outputs, and randomness---and
\sdv[\adv]'s output are given to an environment \env, which can also observe
all communications within each world.
\env's goal is to determine which world contains the real protocol.
To prompt the worlds  to diverge, \env can delay and reorder messages, and even
control external inputs (potentially causing
failures).
Intuitively, \env represents anything external to the protocol, such as
concurrently executing systems.
We say that the real protocol is secure if,
for any adversary \adv, we can construct \sdv[\adv] such that \env
can never distinguish between the worlds.

\par\textbf{Assumptions}
The security of \sys relies on four assumptions.
\one Canonical Ring ORAM is linearizable 
\two MVTSO generates serializable executions.
\three The network will retransmit dropped packets. The adversary learns of the retransmissions, but nothing more.
\par\textbf{Ideal Functionality}
To define the ideal functionality \ideal, recall that the proxy is considered trusted while interactions with the cloud storage are not.
This allows \ideal to replace the proxy and intermediate between clients and the storage server,
performing the same functions as the proxy (we do not try to hide the concurrency/batching logic).
We must, however, define \ideal to obliviously hide data values and access patterns.
To this end, when the proxy logic finalizes a batch, \ideal simply informs the storage server that it is executing a read or write batch.
Since \ideal is a theoretical ideal, we allow it to manage all storage internally, so it then updates its local storage and furnishes the appropriate response to each client.

In this setup, modeling proxy crashes is straightforward.
Crashes can occur at any time and cause the proxy to lose all state.
So, on an external input to crash, \ideal simply clears its state.
Since we accept that \adv may learn of proxy crashes, \ideal also sends a message to the storage server that it has crashed.

\par\textbf{Proof Sketch}
The correctness of the system is straightforward, as \ideal behaves much the same as the proxy.

To prove security, we must demonstrate that, for any algorithm \adv defining the behavior of the storage server, we can accurately simulate \adv's behavior using only the information provided by \ideal.
Note that the simulator \sdv[\adv] can run \adv internally, as \adv is simply an algorithm.
Thus we can define \sdv[\adv] to operate as follows.
When \sdv[\adv] receives notification of a batch, it constructs a parallel ORAM batch from uniformly random accesses of the correct type.
It provides these accesses to \adv and produces \adv's response.

The security of this simulation hinges on two key properties:
\one the caching and deduplication logic do not affect the distribution of physical accesses, and
\two the physical access pattern of a parallelized batch is entirely determined by the physical accesses proscribed by sequential Ring ORAM for the same batch.
The first follows from Ring ORAM's guarantee that each access will be an independent uniformly random path---removing an independently-sampled element does not change the distribution of the remaining set.
The second follows from the parallelization procedure simply aggregating all accesses and performing all reads followed by all writes.

These properties ensure that the random access pattern produced by \sdv[\adv] is identical to the access pattern produced by the proxy when operating on real data.
Thus the simulated \adv must behave exactly as it would when provided with real data, and produce indistinguishable output.

\section{Implementation}
\label{sec:implementation}

\begin{figure}
\centering
\includegraphics[width=1.0\linewidth]{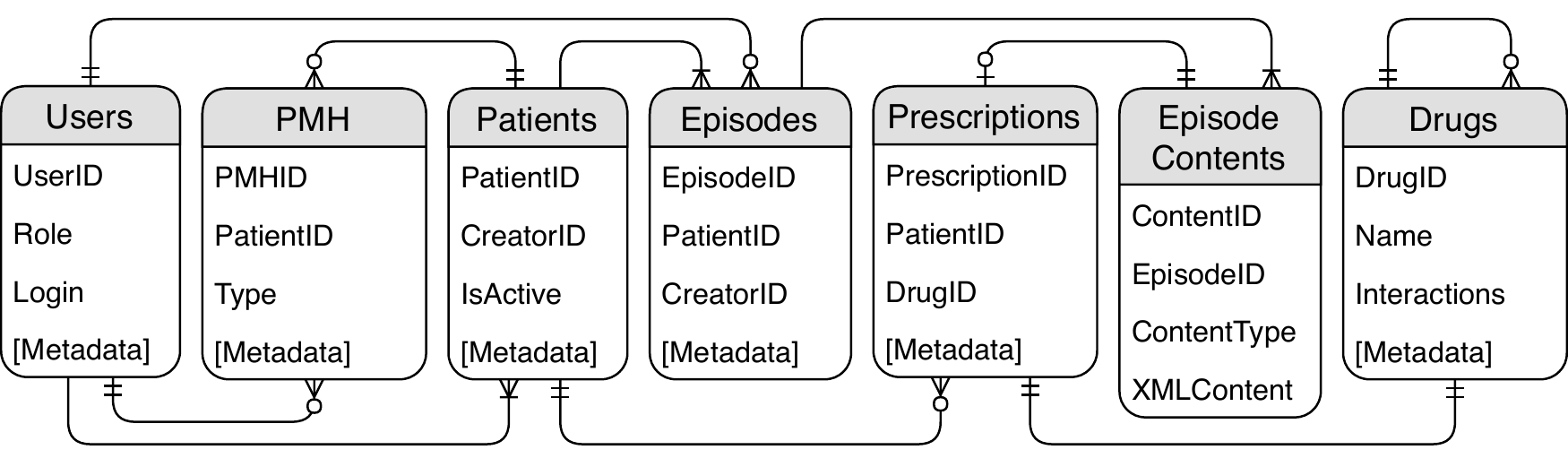}
    \caption{FreeHealth Database Architecture}
    \label{fig:freehealth}
\end{figure}

Our prototype consists of 41,000 lines of Java code. We use the Netty
library for network communication (v4.1.20), Google protobuffers for
serialization (v3.5.1), the Bouncy Castle library (v1.59) for
encryption, and the Java MapDB library (v3) for persistence. We
additionally implement a non-private baseline (NoPriv).  NoPriv shares
the same concurrency control logic (TSO), but replaces the proxy data
handler with non-private remote storage. NoPriv neither batches nor
delays operations; it buffers writes at the local proxy until commit,
and serves writes locally when possible.

\section{Evaluation}
\label{sec:evaluation}

\sys{} leverages the flexibility of transactional commits to mitigate the overheads of ORAM. To quantify the benefits and limitations of this
approach, we ask:
\begin{enumerate}
    \item How much does \sys{} pay for privacy? (\S\ref{subsec:end})
    \item How do epochs affect these overheads? (\S\ref{subsec:epochs})
    \item Can \sys{} recover efficiently from failures? (\S\ref{subsec:durability})
\end{enumerate}

\par \textbf{Experimental Setup} The proxy runs on a c5.xlarge
Amazon EC2 instance (16 vCPUs, 32GB RAM), and the storage 
on an m5.4xlarge instance (16 vCPUs, 64GB RAM).
The ORAM tree is configured with $Z=100$ and optimal values of $S$
and $A$ (respectively, 196 and 168)~\cite{ren15ringoram}. We report
the average of three 90 seconds runs (30 seconds ramp-up/down).

\par \textbf{Benchmarks} We evaluate the performance of our system using
three applications: TPC-C~\cite{difallah2013oltpbench,tpcc}, SmallBank~\cite{difallah2013oltpbench}, and FreeHealth~\cite{freehealthfrench,freehealth}. Our microbenchmarks
use the YCSB~\cite{cooper2010ycsb} workload generator.
\textbf{TPC-C}, the defacto standard for OLTP workloads,
simulates the business logic of e-commerce
suppliers. We configure TPC-C to
run with 10 warehouses~\cite{xie2015callas}.
In line with prior transactional key-value stores~\cite{su2017tebaldi},
we use a separate table as a secondary index on the \texttt{order}
table to locate a customer’s latest order in the \texttt{order status}
transaction, and on the \texttt{customer} table to look up customers
by their last names (\texttt{order status} and
\texttt{payment}). \textbf{Smallbank}~\cite{difallah2013oltpbench}
models a simple banking application supporting money transfers,
withdrawals, and deposits. We configure it to run with one million
accounts.  Finally, we port
\textbf{FreeHealth}~\cite{freehealthfrench,freehealth}, an
actively-used cloud EHR system (Figure~\ref{fig:freehealth}).
FreeHealth supports the business logic of medical practices and
hospitals. It consists of 21 transaction types that doctors use to
create patients and look up medical history, prescriptions, and drug
interactions.

\subsection{End-to-end Performance}
\label{subsec:end}

\begin{figure}[tb]%
\centering
    \subfloat[Throughput]
    {\label{graph:stride}
    \includegraphics[width=0.45\linewidth,valign=b]
    {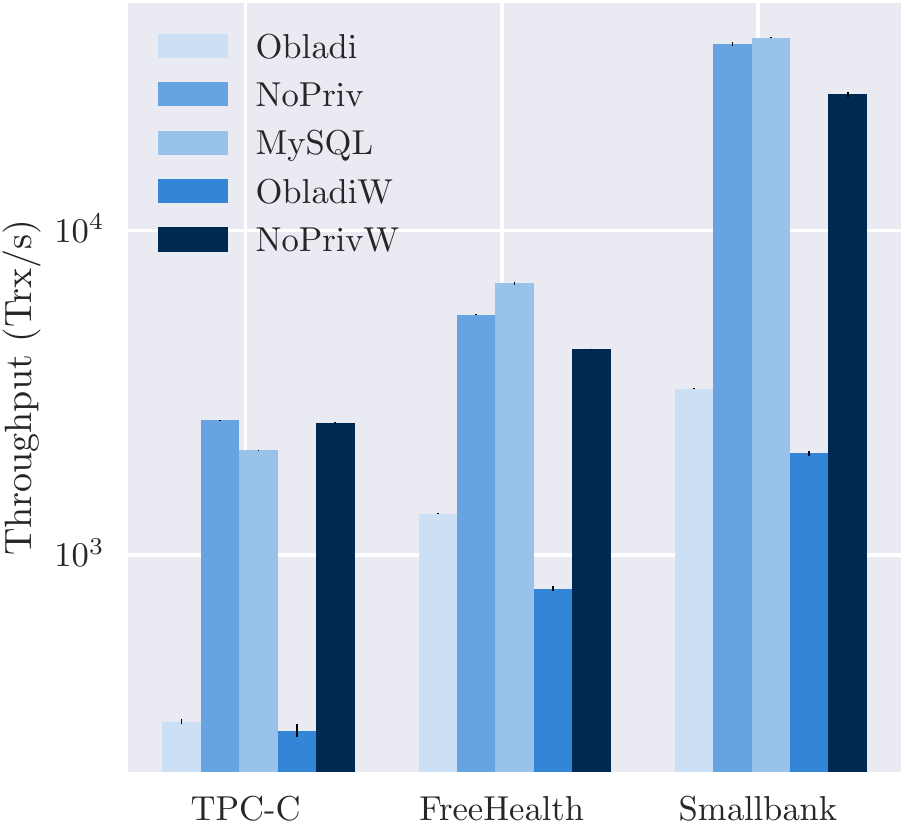}}
    \subfloat[Latency]
    {\label{graph:applicationslatency}
    \includegraphics[width=0.45\linewidth,valign=b]
    {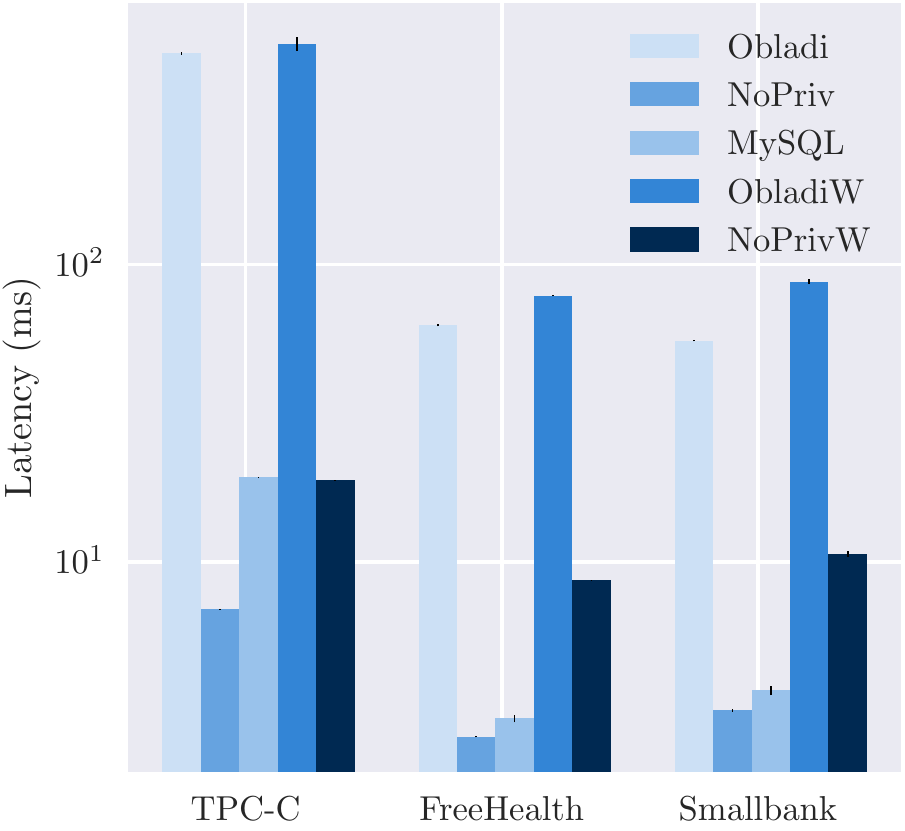}}
     \caption{Application Performance}%
     \label{fig:e2e}
\end{figure}

Figure~\ref{fig:e2e} summarizes the results from running the three
end-to-end applications in two setups: a local setup in 
which the latency between proxy and server is low (0.3ms) (\textbf{Obladi, NoPriv}), and a more realistic
WAN setup with 10ms latency (\textbf{ObladiW},  \textbf{NoPrivW}). We additionally
compare those results with a local MySQL setup. MySQL, unlike NoPriv, cannot buffer writes. We consequently
do not evaluate MySQL in the WAN setting.
\par \textbf{TPC-C} \sys{} comes within 8\texttimes\ of NoPriv's
throughput, as NoPriv is contention-bottlenecked on the high rate of
conflicts between the \texttt{new-order} and \texttt{payment}
transactions on the \texttt{district} table. NoPriv's
  performance is itself slightly higher than MySQL as the use of MVTSO
  allows for the \texttt{new-order} and \texttt{payment} transactions
  to be pipelined. In contrast, MySQL acquires exclusive locks for the
  duration of the transactions.  Latency, however, spikes to
70\texttimes\ over NoPriv because of the inflexible execution pattern
\sys{} needs for security.  Transactions in TPC-C vary heavily in
size.
Epochs must be large
enough to accommodate all transactions, and hence artificially
increase the latency of short instances. Moreover, write
operations must be applied atomically during epoch changes. For a
write batch size of 2,000, this process takes on average 340ms, further
increasing latency for individual transactions. The write-back process
also limits throughput, even preventing non-conflicting operations
from making progress (in contrast, NoPriv can benefit from writes
never blocking reads in MVTSO). Epoch changes also introduce
additional aborts for transactions that straddle epochs.
The additional 10ms latency of the WAN
setting has comparatively little effect, as the large write batch
size of TPC-C is the primary bottleneck: throughput remains within
9x of NoPrivW. Also NoPrivW's performance  does not degrade:
since MVTSO exposes uncommitted writes immediately, increasing commit latency 
does not increase contention.

\par \textbf{Smallbank} Transactions in Smallbank are more
homogeneous (between three and six operations); thus,  the length of
an epoch can be set to more closely approximate most
transactions, reducing latency overheads (17\texttimes\ NoPriv).
NoPriv is CPU bottlenecked for Smallbank; the relative throughput
drop for \sys{} is higher (12\texttimes) because of the overhead of
changing epochs and the blocking that it introduces. Transaction
dependency tracking becomes a bottleneck in NoPriv, 
resulting in a 15\% throughput loss over MySQL. Increasing
latency between proxy and storage causes both systems' throughput to
drop. ObladiW's 35\% drop is due to the increased duration of
epoch changes (during which no other transactions
can execute) while NoPrivW's 30\% drop stems from the larger
dependency chains that arise from the relatively long commit phase. 

\par \textbf{FreeHealth} Like SmallBank, FreeHealth consists of fairly
short transactions and can thus choose a fairly small epoch (five read
batches), reducing the impact on latency  (20\texttimes\ NoPriv). Unlike
Smallbank, however, FreeHealth consists primarily of read operations,
and so it can choose a much smaller write batch (200), minimizing the
cost of epoch changes and maximizing throughput (only a 4\texttimes\
drop over NoPriv and a 5.5\texttimes\ over NoPrivW for ObladiW).  Both NoPriv and \sys are contention-bottlenecked on the
creation of {\em episodes}, the core units of EHR
systems that encapsulate prescriptions, medical history, and patient
interaction.

\subsection{Impact of Epochs}
\label{subsec:epochs}

Though epochs create blocking and cause aborts, they are
key to reducing the cost of accessing ORAM, as they allow to  \one~securely parallelize the  ORAM and
\two~delay and buffer bucket writes.  To quantify epochs' impact on performance as a function of their
size and the underlying storage properties, 
\begin{figure}[tb]%
\centering
    \subfloat[Parallelism (Batch Size 500)]
    {\label{graph:parallel}
    \includegraphics[width=0.45\linewidth,valign=b]
    {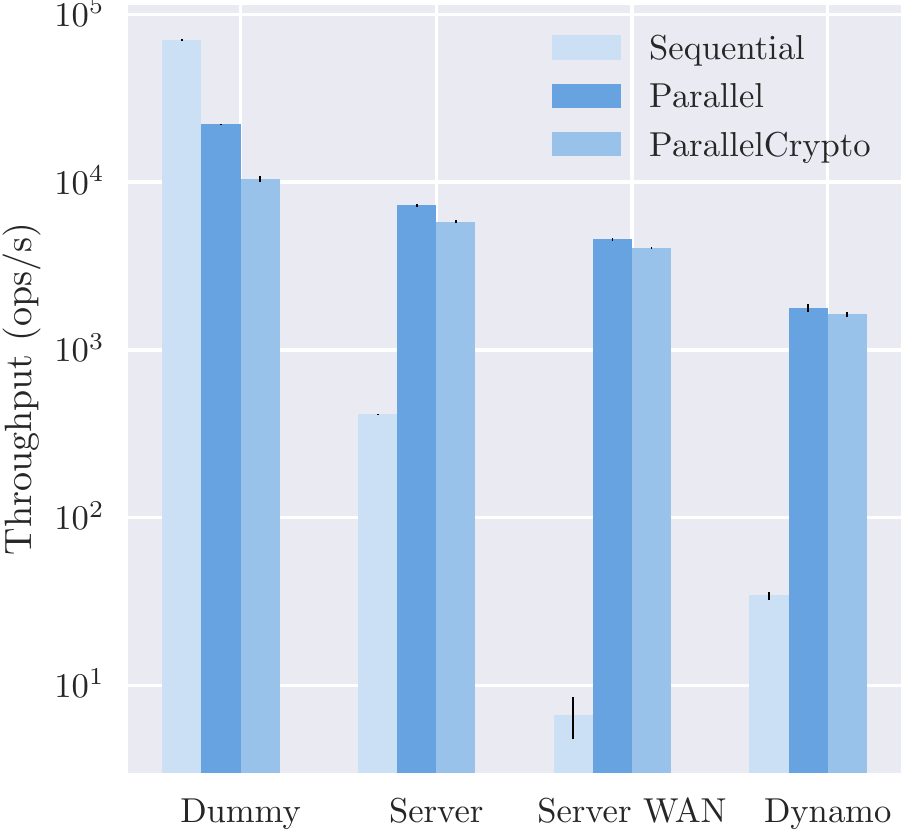}}
    ~~
    \subfloat[Batch Size Throughput]
    {\label{graph:stride}
    \includegraphics[width=0.45\linewidth,valign=b]
    {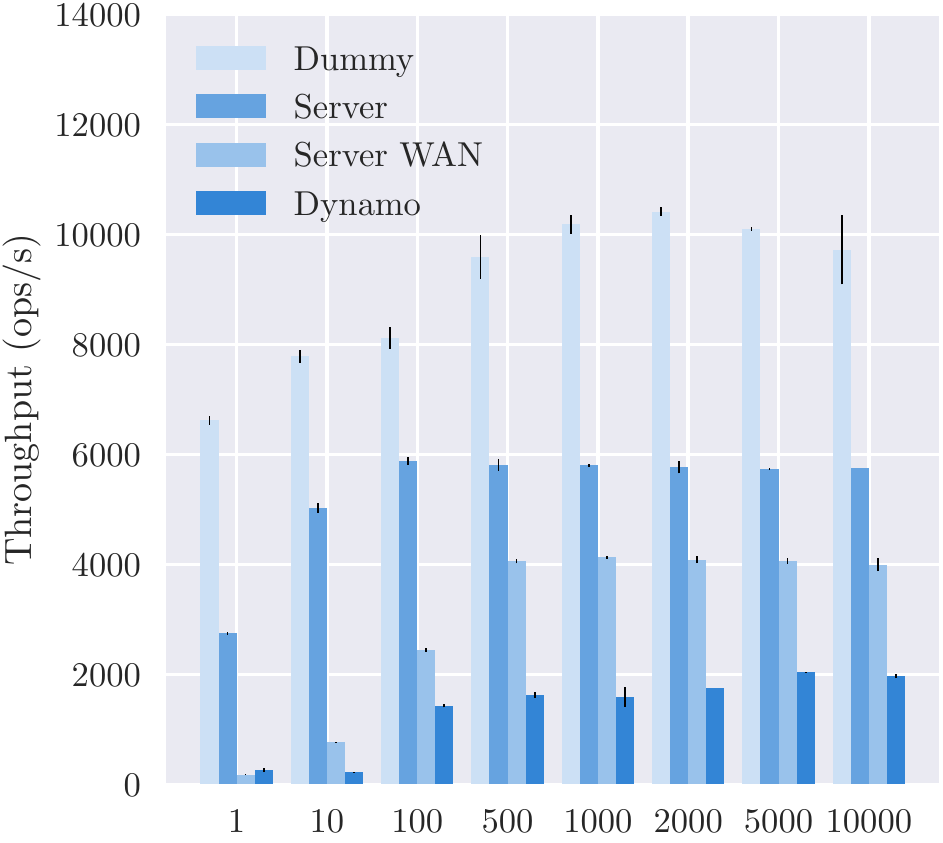}}
    \\[-0.5ex]
    \subfloat[Batch Size Latency] 
    {\label{graph:latency}
    \includegraphics[width=0.45\linewidth,valign=b]
    {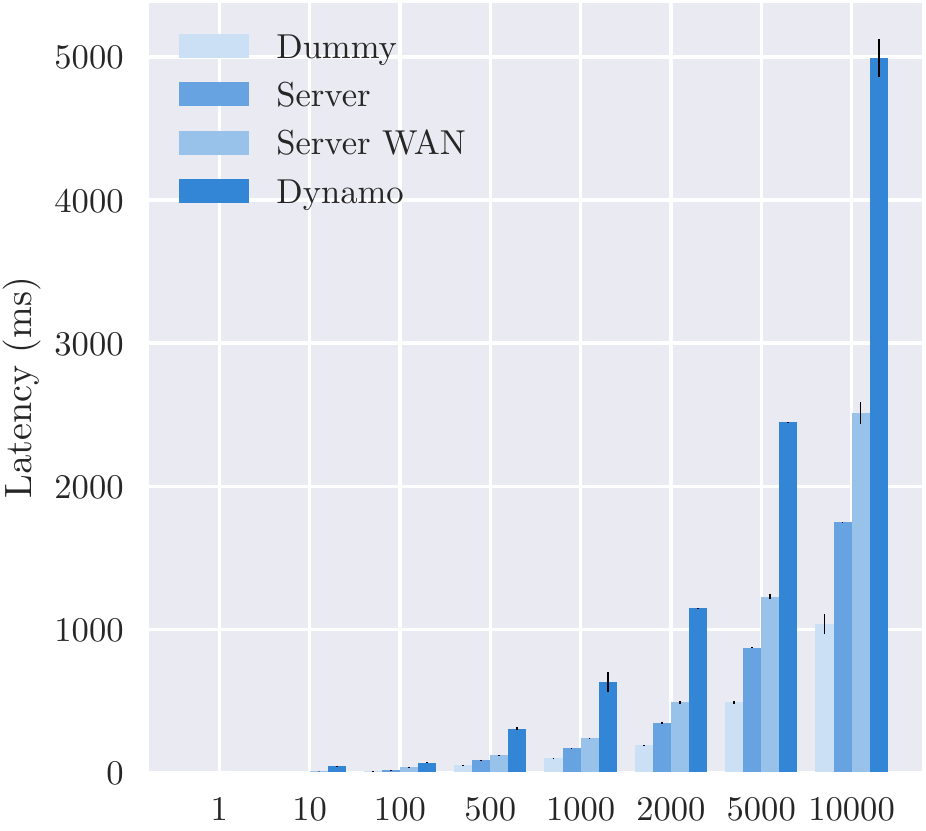}}
    ~~
    \subfloat[Delayed Visibility]
    {\label{graph:delayed}
    \includegraphics[width=0.45\linewidth,valign=b]
    {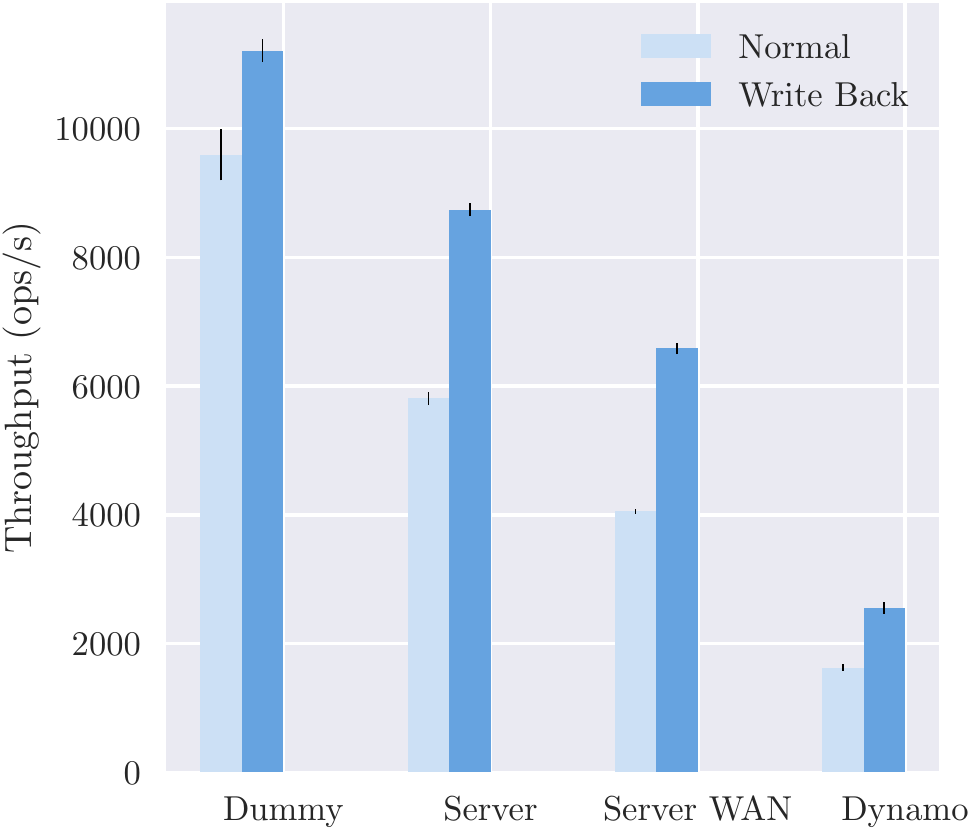}}
    \\[-0.5ex]
    \subfloat[Epoch Size Impact - ORAM]
    {\label{graph:batch}
    \includegraphics[width=0.45\linewidth,valign=b]
    {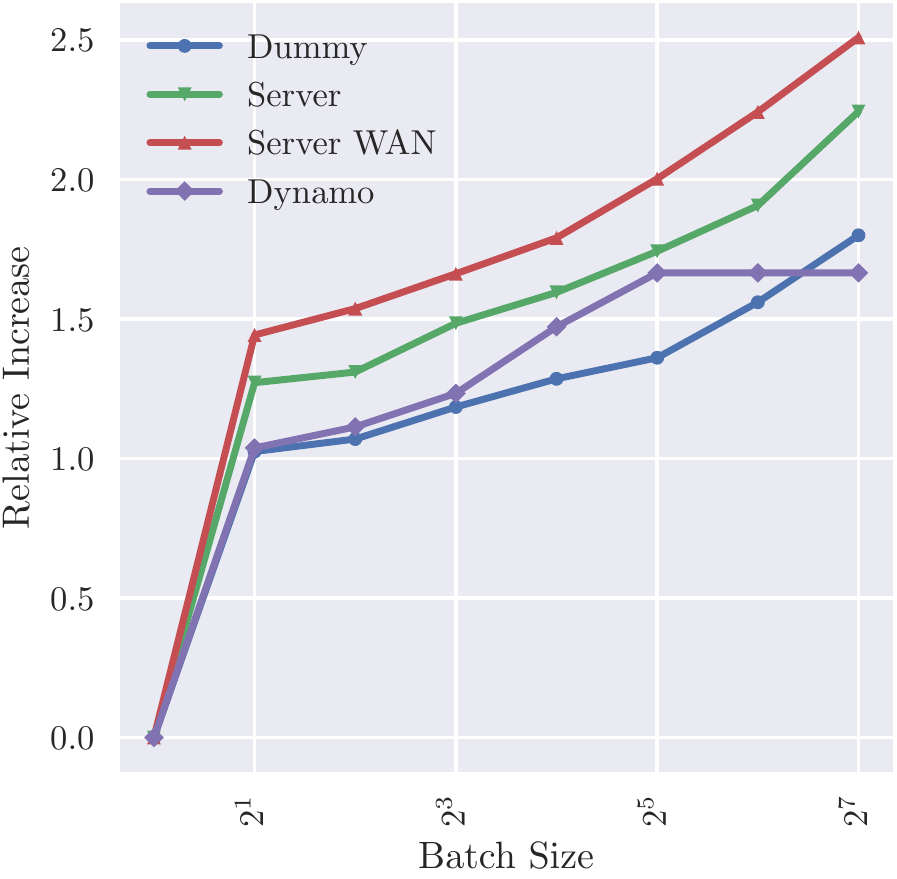}}
    ~~
    \subfloat[Epoch Size Impact - Proxy]
    {\label{graph:proxy}
    \includegraphics[width=0.45\linewidth,valign=b]
    {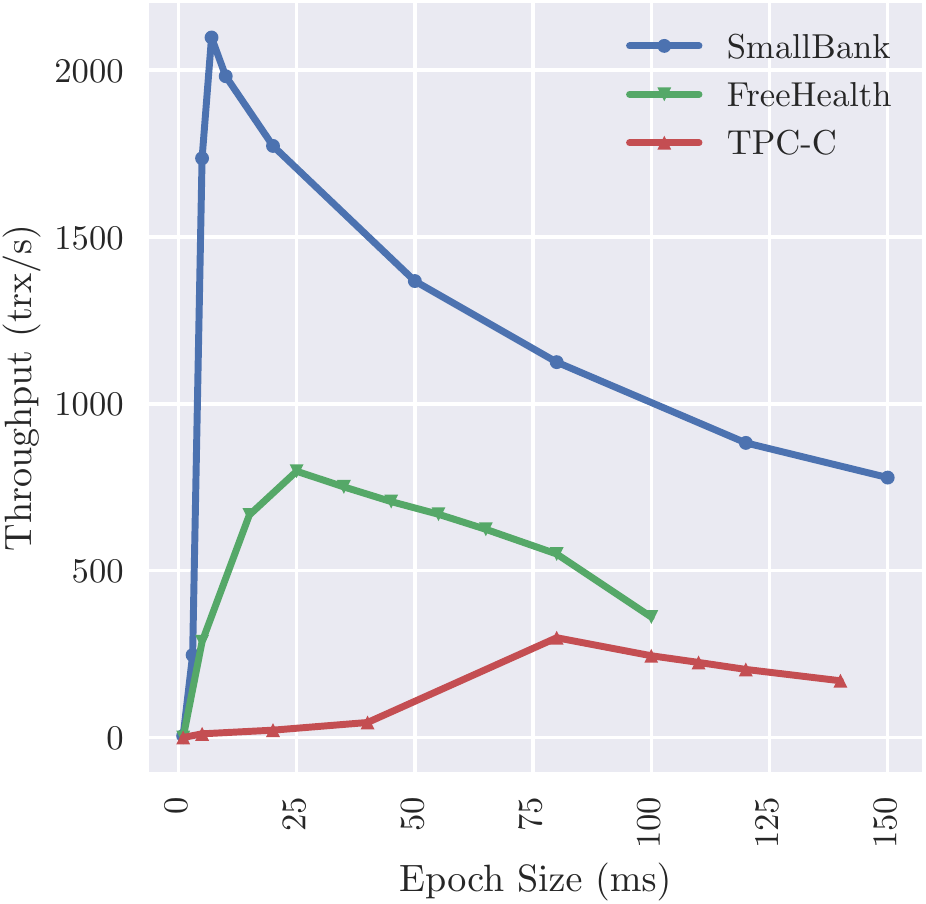}}
    \caption{Performance impact of various features}
\end{figure}
we instantiate an ORAM with 100K objects and choose three different storage backends:
a local dummy (storing no real data) that responds to all reads with a static value and ignores writes (\texttt{dummy});
a remote server backend with an in-memory hashmap (\texttt{server},
ping time 0.3ms) and a remote WAN server backend with an in-memory hashmap (\texttt{server WAN}, ping time 10ms);
and DynamoDB (\texttt{dynamo}, provisioned for 80K req/s, read ping 1ms, write 3ms).

\par \textbf{Parallelization} We first focus on the performance impact
of parallelizing Ring ORAM (ignoring other
optimizations). Graph~\ref{graph:parallel} shows that, unsurprisingly,
the benefits of parallelism increase with the latency of individual
requests.  Parallelizing the ORAM for \texttt{dummy}, for instance,
yields no performance gain; in fact, it results in a 3\texttimes\
slowdown (from 72K req/s to 24K req/s). Sequential Ring ORAM on 
\texttt{dummy} is CPU-bound on metadata computation (remapping paths,
shuffling buckets, etc.), so adding coordination mechanisms to
guarantee multi-level serializability only increases the cost of
accessing a bucket.  As storage access latency increases and the ORAM
becomes I/O-bound, the benefits of parallelism become more
salient. For a batch size of 500, throughput increases by
12\texttimes\ for \texttt{server}, as much as 51\texttimes\ for
\texttt{dynamo}, and 510\texttimes for \texttt{WAN server}. The available
parallelism is a function of both the size/fan-out of the tree and the
underlying resource bottlenecks of the proxy.
Graph~\ref{graph:stride} captures the parallelization speedup for both
intra- and inter-request parallelism, while Graph~\ref{graph:stride}
quantifies the latency impact of batching. The parallelization speedup
achieved for a batch size of one captures intra-request parallelism:
the eleven levels of the ORAM can be accessed concurrently, yielding
an 11\texttimes\ speedup.  As batch sizes increase, \sys{} can
leverage inter-request parallelism to process non-conflicting physical
operations in parallel, with little to no impact on
latency. \texttt{Dynamo} peaks early (at 1750 req/s) because its
client API uses blocking HTTP calls, and \texttt{dummy}'s storage
eventually bottlenecks on encryption, but \texttt{server} and \texttt{WAN
server} are more interesting. Their throughput is limited by the
physical and data dependencies on the upper levels of the tree (recall
that paths always conflict at the root (\S\ref{sec:parallel})).

\par \textbf{Work Reduction} 
To amortize ORAM overheads across a large number of operations, \sys{}
relies on delayed visibility to buffer bucket writes until the end of
an epoch, when they can be executed in parallel, discarding
intermediate writes. Reads to those buckets are directly served from
the proxy, reducing network communication and CPU work (as encryption
is not needed). Graph~\ref{graph:delayed} shows that 
enabling this optimization for an epoch of eight batches (a setup
suitable for FreeHealth and TPC-C) yields a 1.5\texttimes\ speedup on
both \texttt{dynamo} and the server, a 1.6\texttimes\ speedup on the
WAN server, but only minimal gains for \texttt{dummy}
(1.1\texttimes). When using a small number of
batches, throughput gains come primarily from
combining duplicate operations in buckets near the top of the
tree. For example, the root bucket is written 27 times in an epoch of size
eight (once per eviction, every 168 requests).  As these
operations conflict, they must be executed sequentially and quickly
become the bottleneck (other buckets have fewer operations to
execute). Our optimization lets \sys{}  write the
root bucket only once, significantly reducing latency and thus
increasing throughput. As epochs grow in size,
increasingly many buckets are buffered locally until the
  end of the epoch (\S\ref{sec:parallel}), allowing reads to
be served locally and further reducing I/O with the storage.  Consider
Graph~\ref{graph:batch}: throughput increases almost logarithmically;
metadata computation eventually becomes a bottleneck for
\texttt{dummy}, while \texttt{server} and \texttt{server WAN}
eventually run out of memory from storing most of the tree (our AWS
account did not allow us to provision \texttt{dynamo} adequately for
larger batches). Larger epochs reduce the raw amount of work per  operation: with one batch, \sys requires 41 physical
requests per logical operation, but only requires 24 operations with
eight batches.
  For real transactional workloads, however,  epochs are
   not a silver bullet. Graph~\ref{graph:proxy} suggests that 
  applications are very sensitive to identifying the right epoch duration:
  too short and transactions cannot make progress, repeatedly aborting;
  too long and the system will remain unnecessarily idle.

\subsection{Durability}
\label{subsec:durability}

\begin{figure}[tb]
    \vspace*{-1ex}
    \subfloat[Checkpoint Frequency (100K)]
    {\label{graph:checkpointing}
    \parbox[c][3.5cm]{0.47\linewidth}{
      \centering
      \includegraphics[width=\linewidth]{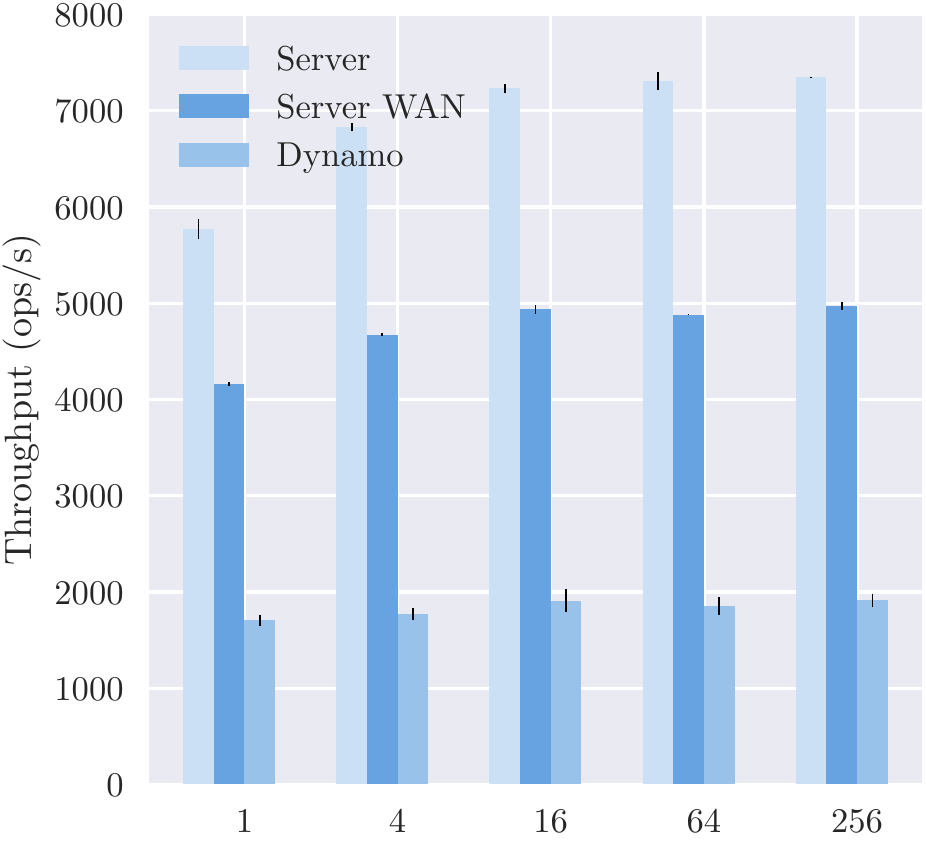}
    }}
    ~~
    \subfloat[Server Wan Recovery Time (ms)]
    {\label{graph:recovery}
    \parbox[c][3.5cm]{0.47\linewidth}{
      \centering
      \footnotesize{
 \setlength\tabcolsep{2pt}
\begin{tabular}[b]{|c|c|c|c|}
\hline
             &  10K & 100K &   1M \\ \hline
    Levels   &    7 &   11 &   14 \\ \hline
    Slowdown & 0.83 & 0.88 & 0.89 \\ \hline % all of these numbers are from the 1_2 run in the current-results folders
    RecTime  & 1452 & 2604 & 6080 \\ \hline
    Network  &  182 &  681 &  848 \\ \hline
    Pos      &    8 &   74 & 1610 \\ \hline
    Perm     &   15 &  218 & 1424 \\ \hline
    Paths    &  864 & 1104 & 1341 \\ \hline
\end{tabular}
}

    }
    }
    \caption{Durability}
\end{figure}

Table~\ref{graph:recovery} quantifies the efficiency of failure
recovery and the cost it imposes on normal execution for ORAMS of
different sizes (we show space results for only the WAN server as
Dynamo follows a similar trend).  During normal execution, durability
imposes a moderate throughput drop (from 0.83\texttimes\ for 10K to
0.89\texttimes\ for 1M).  This slowdown is due to the need to
checkpoint client metadata and to synchronously log read paths to
durable storage before reading. As seen in
Graph~\ref{graph:checkpointing}, computing diffs mitigates the impact
of checkpointing.  Recovery time similarly increases as the ORAM
grows, from 1.5s to 6.1s (Table~\ref{graph:recovery}, \emph{RecTime}).
The costs of decrypting the position and permutation maps (\emph{Pos}
and \emph{Perm}) are low for small datasets, but grow linearly
with the number of keys.  Read path logging (\emph{Paths}) instead
starts much larger, but grows only with the depth of the tree.

\section{Related Work}
\label{sec:related}

\par \textbf{Batching} \sys{} amortizes ORAM costs by grouping
operations into epochs and committing at epoch
boundaries. Batching can mitigate expensive security
primitives, e.g., it reduces server-side computation in
private information retrieval (PIR)
schemes~\cite{gupta16popcorn,beimel04reducing,lueks2014sublinear,ishai2006batchcodes},
amortizes the cost of shuffling networks in
Atom~\cite{kwon2017atom} and the cost of verifying integrity in Concerto~\cite{arasu2017concerto}. Changing when operations output commit is
a popular performance-boosting technique: it yields
significant gains for state-machine
replication~\cite{ports2015specpaxos,
  kotla2010zyzzyva,kapritsos2012eve}, file
systems~\cite{nightingale2008rethink}, and transactional
databases~\cite{mehdi17occult,crooks2017seeing,tu2013silo}.

\par \textbf{ORAM parallelism} \sys{} extends recent work on parallel
ORAM constructions~\cite{lorch2013shroud,williams12privatefs,bindschaedler2015curious} to extract parallelism
both \textit{within} and \textit{across} requests.
Shroud~\cite{lorch2013shroud} targets intra-request parallelism by
concurrently accessing different levels of tree-based ORAMs. Chung et
al~\cite{boyle16pram} and PrivateFS~\cite{williams12privatefs} instead
target inter-request parallelism, respectively in
tree-based~\cite{shi2011oblivious} and
hierarchical~\cite{williams2008castles} ORAMs. Both works execute
requests to distinct logical keys concurrently between reshuffles or
evictions and deduplicate concurrent requests for the same key to
increase parallelism. \sys{} leverages delayed visibility to separate
batches into read and write phases, extracting concurrency both within
requests and across evictions. Furthermore, \sys{} parallelizes across
requests by deduplicating requests at the trusted proxy.

ObliviStore~\cite{stefanov2012towards} and
Taostore~\cite{sahin2016taostore} instead approach parallelization by
focusing on asynchrony. ObliviStore~\cite{stefanov2012towards}
formalizes the security challenges of scheduling requests
asynchronously; the  oblivious scheduling mechanism  that it presents
for that model however is computationally expensive and requires a
large stash, making ObliviStore unsuitable for implementing ACID
transactions. Like ObliviStore, Taostore leverages asynchrony to parallelize Path
ORAM~\cite{stefanov13pathoram}, a tree-based construction from which
Ring ORAM descends. Taostore, however, targets a different threat
model: it assumes both that requests must be processed immediately,
and that the timing of responses is visible to the adversary. Request
latencies thus necessarily increase linearly with the number of
clients~\cite{williams12privatefs}.

\par \textbf{Hiding access patterns for non-transactional systems}
Many systems seek to provide access pattern protections for analytical
queries: Opaque~\cite{zheng17opaque} and
Cipherbase~\cite{arasu2013cipherbase} support oblivious operators for
queries that scan or shuffle full tables. Both rely on hardware
enclaves for efficiency: Opaque runs a query optimizer in
SGX~\cite{SGX}, while Cipherbase leverages secure co-processors to
evaluate predicates more efficiently. Others seek to
hide the parameters of the query rather than the query itself:
Olumofin et al.~\cite{olumofin2010pirsql} do it via multiple rounds
of keyword-based PIR operations~\cite{chor97keyword};
Splinter~\cite{wang17splinter} reduces the number of round-trips
necessary by mapping these database queries to function secret sharing
primitives.  Finally, ObliDB~\cite{saba2017oblidb} adds support for
point queries and efficient updates by designing an oblivious B-tree
for indexing. The concurrency control and
recovery mechanisms of all these approaches introduce timing channels
and structure writes in ways that leak access patterns~\cite{arasu2013cipherbase}.

\par \textbf{Encryption} Many commercial
systems offer the possibility to store encrypted data~\cite{alwaysencrypted,dynamoencrypted}.
Efficiently executing data-dependent queries like joins, filters,
or aggregations without knowledge of the plaintext is
challenging: systems like CryptDB~\cite{popa11cryptdb},
Monomi~\cite{tu2013monomi}, and Seabed~\cite{papadimitriou2016seabed}
tailor encryption schemes to allow executing certain queries
directly on encrypted data. Others leverage trusted hardware
~\cite{bajaj2011trusteddb}. In contrast, executing transactions on encrypted data
is straightforward: neither concurrency control
nor recovery requires knowledge of the plaintext data.

\section{Conclusion}
\label{sec:conclusion}
This paper presents~\sys{}, a system that, for the first time, considers the
security challenges of providing ACID transactions without revealing access patterns.
\sys{} guarantees security and durability at moderate cost through a simple observation:
transactional guarantees are only required to hold for committed transactions. 
By delaying commits until the end of epochs, \sys{} inches closer to
providing practical oblivious ACID transactions.

\par \textbf{Acknowledgements} We thank our shepherd, Jay Lorch, for
his commitment to excellence, and the anonymous reviewers for their
helpful comments. We are grateful to Sebastian Angel, Soumya Basu,
Vijay Chidambaram, Trinabh Gupta, Paul Grubbs, Malte Schwarzkopf,
Yunhao Zhang, and the MIT PDOS reading group for their feedback.  This
work was supported by NSF grants CSR-1409555 and CNS-1704742, and an
AWS EC2 Education Research grant.

{\footnotesize \bibliographystyle{acm}
\bibliography{refs}}

\appendix
\section{Ensuring Data Integrity in \sys}
\label{sec:integrity}

As we described in \S\ref{sec:threat}, we assume the untrusted storage
server is \emph{honest-but-curious}.  In many cases this is a very
strong assumption that system operators may not be happy to make.  We
can remove this requirement with the use of Message Authentication
Codes (MACs) and a trusted counter---used to ensure freshness---that
persists across crashes.  We describe this technique here.

When we assumed the server was honest-but-curious, we assumed it could
deny service, but would otherwise correctly respond to all queries.
In order to remove this assumption while maintaining security, we must
create a means to identify if the storage server returns incorrect
data, thus reducing them to DoS attacks.  To do this, the proxy must
verify that the returned value is the value \one most recently written
\two by the proxy \three to the specified location.

We can guarantee \two using MACs.  At initialization, the proxy
generates a secret MAC key (in addition to its secret encryption key)
and attaches a MAC to every piece of data it stores on the cloud
server.  This allows the proxy to verify that the cloud server did not
modify the data or manufacture its own.

By themselves, MACs do not guarantee \one and \three, as the cloud
server can provide an old copy of the data or valid data from a
different location, both of which will have valid MACs.  We
additionally need to include a unique identifier that the proxy can
easily recompute.  For data that is written at most once per epoch,
this unique identifier can be the pair of epoch, ORAM location.  Due
to Ring ORAM's deterministic eviction algorithm, the proxy can compute
the epoch during which any given block was most recently written
knowing only the current epoch counter and the early reshuffle table.

There is exactly one value which is written multiple times per epoch:
each read batch, of which there may be many per epoch, logs the
accessed locations.  This means the counter associated with those
writes must uniquely identify the read batch, not just the epoch.  In
fact, since every epoch has the same number of read batches, a read
batch counter is sufficient for all values.

\textbf{Handling Crashes} The above modifications are sufficient to
guarantee integrity if the proxy never crashes.  When the proxy
crashes, however, it needs information from the cloud storage to
recover.  To guarantee integrity---in particular freshness---of the
recovery data, the epoch/read batch counter we describe above must
persist in a trustworthy fashion across failures.  Perhaps the easiest
way to implement this requirement is to store the counter on a small
amount of nonvolatile storage locally on the proxy, but any
trustworthy and persistent storage mechanism is sufficient.

This, of course, raises the question of when to update this
trustworthy persistent counter.  Once the update occurs, a recovering
proxy will expect the cloud storage to provide data associated with
that counter value.  This means that the counter must be updated after
writing to cloud storage.  Because a recovering proxy will be unaware
of the newly-written data until the counter is updated, we do not
consider the write complete until the counter is properly updated.  As
usual, if the proxy crashes while a write is in-progress, the write is
simply rolled back.

As long as the storage server cannot learn anything from incomplete
writes, our new strategy is entirely secure.  Because the timing of
\sys's writes is completely deterministic and their locations are
determined entirely by the locations of prior reads, the fact that a
write has aborted does not inherently leak any information.  The
contents of the write can, however, leak information if we are not
careful.  Most data in the system is already encrypted, but one value
is not: the read logs written during read batches.  Previously we had
no need to encrypt these as the write operation was atomic and the
cloud server was to immediately learn all data contained in the write.
However, the write is no longer atomic; the proxy can crash after
sending data to the cloud server but before updating its trusted
counter.  In this case the storage server may withhold that data on
recovery without detection and learn whether the proxy accessed the
same locations after recovery is complete.  To fix this leak, we
encrypt the read batch logs in the cloud and update the counter after
writing the log but before reading any values.  That way the cloud
storage gains no information about what data will be read until after
the write is complete, at which point the proxy will always replay the
read if a crash occurs.  Thus we have removed the leakage.

As we will see in Appendix~\ref{sec:formal-security}, these
modifications are sufficient to guarantee both confidentiality and
integrity (though obviously not availability) even against an
arbitrarily malicious cloud storage server.

\section{Formal Security}
\label{sec:formal-security}

We now provide formal security definitions and proofs for \sys.
As we discuss in \S\ref{sec:security}, we use the Universal Composability (UC) framework~\cite{canetti01}.
The UC framework requires us to specify an ideal functionality \ideal that defines what it means for \sys to be secure.
We must then prove that, for every possible adversarial algorithm \adv specifying the behavior of the storage server, we can simulate \adv's behavior when interacting only with \ideal.

We prove security of the scheme including the modification in Appendix~\ref{sec:integrity} and do not assume the cloud storage provider is trusted for integrity.
As the MACs and counters are only used to verify integrity and freshness of data, they are unnecessary if the cloud server is being honest.
As we will see below, removing them---as we do in our implementation---does not impact security in this case.

We also noted in Appendix~\ref{sec:integrity} that the proxy requires a trusted epoch counter that persists across crashes.
This could be implemented as an integer in local non-volatile storage that the proxy updates with each epoch,
it could be implemented by trusting the cloud storage for integrity and saving it there, or other means.
We abstract away this detail by providing the \sys protocol with access to \idealEpoch, an ideal functionality that provides access to this counter.

\subsection{Ideal Functionality}

We begin by noting that \sys's proxy acts as a trusted central coordinator that performs publicly-known logic on private data.
As this is essentially the role played by any ideal functionality, we simply subsume the proxy into \ideal.
Moreover, some of the proxy's behavior, like the fact that it deduplicates and caches accesses, pads under-full batches,
is public information, meaning \ideal can explicitly perform exactly the same operations.

In \S\ref{sec:architecture} we describe the proxy as consisting of a concurrency control unit and a data manager, which itself contains a batch manager and ORAM executor.
As the concurrency control and batch management functionalities do not inherently leak any information, we define \ideal in terms of those operations.
In particular, we let \idealProxy represent this functionality.
\idealProxy is defined as providing the exact functionality of the concurrency control unit and batch manager as described in \S\ref{sec:architecture} and \S\ref{sec:proxy}.
\idealProxy has the following ways to interface with \ideal:
\begin{itemize}
  \item \ideal can supply \idealProxy with an input from a client (start, read, write, or commit).
  \item \idealProxy can produce a read batch of logical data blocks.
    The batch need not be full, meaning it may contain fewer than the maximum number of reads for a batch.
    \ideal can then respond with the requested blocks.
  \item \idealProxy can produce a write batch of logical data blocks. The batch need not be full.
    \ideal can then respond confirming the writes have completed.
  \item \idealProxy can specify an epoch has ended and transactions should commit.
    \ideal can then respond with confirmation.
  \item \ideal can clear \idealProxy's internal state, representing a crash.
\end{itemize}
\idealProxy can additionally send a messages directly to clients.

\textbf{Modeling Crashes}
In the real system the proxy can crash at any time.
As all state except the cryptographic keys (and possibly trusted counters) is considered volatile,
it does not matter when during a local operation the proxy crashes, as every piece of that operation is lost regardless.
We can therefore simplify the ideal functionality by allowing for crashes both between requests and immediately prior to
any operation within a request that either leaves the proxy (e.g., writing to cloud storage) or persists across crashes (e.g., updating the trusted epoch counter).

To model any possible crash, we control the timing of the crashes through a Crash Client.
\ideal queries the Crash Client immediately prior to any relevant action and waits for a reply.
The Crash Client then waits for a prompt from the environment, which it forwards to \ideal, telling it to proceed or crash.
Additionally, the Crash Client---again at the prompting of the environment---can issue a ``crash'' command independently between requests.

We provide the full specification for \ideal in Algorithm~\ref{alg:ideal-func}, which references \idealProxy.
For notational clarity, we do not explicitly specify every call to the Crash Client.
Instead any operation prefixed by {\tinydag} notifies the Crash Client before executing and crashes if instructed.
Note that it is possible to crash while recovering from a crash.

\begin{algorithm}[tb!]
  \SetKwBlock{Init}{Initialize}{end}
  \SetKwBlock{Fn}{function}{end}
  \SetKwFunction{Recover}{crashRecover}

  \KwData{$D = \text{DatabaseState}$}
  \KwData{Counters $c_e = 0~;~c_b = 0$}
  \BlankLine
  \Init{
    Initialize \idealProxy \;
    Begin epoch \;
  }
  \BlankLine
  \Receive($m$ from client \client) {
    Forward $(m, \client)$ to \idealProxy \;
  }
  \BlankLine
  \Receive(``\textsf{read-batch}$[\mathit{blks}]$'' from \idealProxy) {
    {\tinydag}Send ``\textsf{read-batch-init}'' to \adv, wait for \OK \;
    {\tinydag}$c_b \leftarrow c_b + 1$ \;
    {\tinydag}Send ``\textsf{read-batch-read}'' to \adv, wait for \OK \;
    Read $\mathit{blks}$ from $D$ \;
    Respond to \idealProxy with results \;
  }
  \BlankLine
  \Receive(``\textsf{write-epoch}$[\mathit{data}]$'' from \idealProxy) {
    {\tinydag}Send ``\textsf{write-epoch}'' to \adv, wait for \OK \;
    {\tinydag}$c_e \leftarrow c_e + 1 ~;~ c_b \leftarrow 0$ \;
    Write $\mathit{data}$ to $D$ \;
    Confirm write/epoch completed to \idealProxy \;
  }
  \BlankLine
  \Receive(``\textsf{crash}'' from Crash Client) {
    Execute \Recover
  }
  \BlankLine
  \Fn(\Recover) {
    Send $(\text{``\textsf{crash}''}, c_e, c_b)$ to \adv \;
    Clear internal state of \idealProxy \;
    Rollback writes to $D$ since beginning of epoch $c_e$ \;
    {\tinydag}$c_e \leftarrow c_e + 1 ~;~ c_b \leftarrow 0$ \;
  }
  \BlankLine
  \parbox{\columnwidth}{\footnotesize {\tinydag}Before executing operation, notify Crash Client. On response of ``crash,'' abort operation and invoke \Recover, otherwise proceed.}

  \caption{Ideal functionality \ideal using \idealProxy.}
  \label{alg:ideal-func}
\end{algorithm}

\subsection{Security Lemmas}

In order to prove the security of \sys, we rely on two lemmas which we alluded to in \S\ref{sec:security}.

\begin{lemma}[Caching and Deduplication]
  \label{lem:cache-dedup}
  Let $D$ be any set of logical reads or writes selected independently from the current ORAM position map.
  Let $D^*$ be the set of accesses resulting from applying the proxy batch manager's caching and deduplication logic to $D$.
  The set of physical accesses needed to realize $D^*$ is identically distributed to the set of physical accesses needed to realize a uniformly random set of logical accesses of the same size.
\end{lemma}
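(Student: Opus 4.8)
The plan is to reduce the claim to a single probabilistic fact --- dropping an independently-chosen sub-collection of i.i.d.\ uniform random variables leaves an i.i.d.\ uniform collection --- after isolating what the physical realization of a batch actually depends on. First I would pin down the meaning of ``physical accesses needed to realize a logical access'': by Ring ORAM's access phase (\S\ref{sec:background}), realizing a logical read or write of a block $b$ means looking up $b$'s current path $p=\mathsf{pos}[b]$ and reading one slot from each bucket on the root-to-$p$ path, the slot being selected from the permutation map exactly as in canonical Ring ORAM; reads and writes are physically indistinguishable, so the type of operations in $D$ is irrelevant. Hence the physical access set of any collection of logical accesses is obtained by (i)~forming the multiset of current paths of the accessed blocks, and (ii)~applying the fixed canonical per-bucket slot-selection procedure to each path. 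Step~(ii) is identical for every logical access, so it suffices to show that the path multisets of ``realize $D^*$'' and ``realize a uniformly random set of $|D^*|$ logical accesses'' are identically distributed.

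Next I would establish the required independence and uniformity. By the path invariant (\S\ref{sec:background}), the position map assigns each block an independent uniformly random leaf; since $D$ is chosen independently of the position map, the paths of the distinct blocks appearing in $D$ are i.i.d.\ uniform and independent of which blocks they are. Deduplication replaces $D$ by its set $D_{\mathrm{dedup}}$ of distinct keys --- a deterministic function of $D$ --- and caching then further removes every key already present in the epoch's \emph{version cache}. Here I would invoke the design of \S\ref{subsec:proxy:dh} and the ``Caching Reads'' discussion: the version cache holds exactly the objects logically accessed earlier in the epoch (all of which were remapped to fresh independent uniform paths), and deliberately \emph{not} the Ring ORAM stash residents that could not be evicted (whose paths are skewed away from the last evicted path, as in Figure~\ref{fig:caching}). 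Consequently the removed set $C$, and hence $D^* = D_{\mathrm{dedup}}\setminus C$, has both its size and its membership determined by the epoch's access history alone, independently of the current path assignments.

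The one genuinely probabilistic step is then: conditioned on $|D^*| = m$, the paths $\{\mathsf{pos}[b] : b\in D^*\}$ are $m$ i.i.d.\ uniform leaves, since selecting a sub-collection of i.i.d.\ uniform variables by an index set that is independent of their values again yields i.i.d.\ uniform variables. A uniformly random set of $m$ logical accesses likewise has, by the path invariant, $m$ i.i.d.\ uniform current paths. Pushing these two identically distributed path multisets through the common slot-selection procedure of step~(ii) produces two identically distributed physical access sets; matching these conditional distributions for every $m$ yields the lemma.

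I expect the crux to be the third-from-last step: one must argue rigorously that deduplication and caching are functions of the workload and access history alone and never of the current position map, and in particular that \sys's caching is confined to logically-accessed objects held in the full-epoch version cache rather than to arbitrary stash contents --- precisely the distinction drawn in \S\ref{subsec:proxy:dh}, and the reason a naive ``cache the whole stash'' optimization would leak (Figure~\ref{fig:caching}). Once that independence is nailed down, the remaining steps --- dropping an independent sub-collection of i.i.d.\ uniforms, and the observation that Ring ORAM's physical realization of a logical access depends only on the target path --- are routine.
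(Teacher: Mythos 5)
Your proposal is correct and follows essentially the same route as the paper's proof: both reduce the claim to the facts that (i) each logical access in $D$ targets an independent uniformly random path (with slot selection a fixed procedure applied per path), and (ii) the elements removed by deduplication and version-cache hits are selected independently of the current path assignments---because cached objects were remapped to fresh uniform paths upon their earlier access and the cache excludes unevicted stash residents---so deleting this independent sub-collection of i.i.d.\ uniform paths leaves the remainder i.i.d.\ uniform. Your explicit decomposition into path multiset plus common slot-selection, and the conditioning on $|D^*|=m$, merely make rigorous what the paper states as ``removing an unbiased sample does not change the distribution.''
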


\begin{proof}
  Since $D$ is selected independently from the current position map in the ORAM,
  Ring ORAM guarantees that the set of physical accesses needed to realize $D$ is identically distributed to that for a uniformly random set of logical reads or writes.
  $D^*$ is simply $D$ with some elements removed, so we claim that the elements removed form an unbiased sample.
  Since removing an unbiased sample from a distribution does not change the distribution, this is sufficient.

  We first note that Ring ORAM guarantees that any independently-selected logical access $d$ results in physical accesses sampled independently from the following distribution.
  First sample a uniformly random path in the tree. Then, for each bucket in that path, sample a uniformly random block from among those not read since the bucket was last written.
  Finally, read all selected blocks.

  In Ring ORAM, whenever a block is read or written, it is immediately remapped to an independent uniformly random path in the tree that determines what will be read next time it is accessed.
  The proxy batch manger's caching and deduplication logic removes access requests for any block previously accessed in this epoch.
  Each of those blocks was mapped to a new independently uniform random path when accessed.
  Moreover, when an epoch ends, the cache is completely flushed, meaning there is no (potentially-biased) caching or deduplication.

  Thus the sample of physical accesses removed by pairing $D$ down to $D^*$ must be unbiased, so $D^*$ must result in a uniformly random set of physical access paths.
\end{proof}

\begin{lemma}[Parallel ORAM]
  \label{lem:parallel-oram}
  The set of parallel physical data operations performed by the proxy ORAM executor over one epoch (as described in \S\ref{sec:parallel})
  is completely determined by the set of sequential physical accesses required to perform the same logical actions in Ring ORAM (plus a single write to the durability store).
\end{lemma}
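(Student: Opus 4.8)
The plan is to argue that the parallelization procedure described in \S\ref{sec:parallel} is a pure \emph{scheduling} transformation: it changes \emph{when} physical operations are issued, but neither adds, removes, nor modifies any physical read or write relative to what sequential Ring ORAM would perform for the same sequence of logical actions (together with the one mandated durability-store write that logs the epoch's read paths and checkpoint deltas). Concretely, I would first fix an epoch and the sequence of logical operations it contains, and let $P_{\mathrm{seq}}$ denote the multiset of physical bucket/metadata operations that canonical sequential Ring ORAM performs when executing exactly those logical operations (reads, early reshuffles, and the fixed number of evict paths). The goal is to show the parallel executor performs precisely $P_{\mathrm{seq}}$ as a set, plus the single durability write, and that this set is fixed once $P_{\mathrm{seq}}$ is fixed.

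The key steps, in order: \textbf{(1)} Recall from \S\ref{sec:parallel} that the parallel executor splits each epoch into a physical read phase and a physical write phase, and that by multilevel serializability it orders conflicting suboperations consistently with the logical order while running non-conflicting ones concurrently. Since the logical order within the epoch is the same as in the sequential execution, the \emph{set} of physical reads issued in the read phase is exactly the set of physical reads Ring ORAM would issue for those logical reads, early reshuffles, and evict-path read phases — reordering and concurrency do not change the set, and Ring ORAM's invariants (no bucket slot read twice between reshuffles) guarantee the physical reads are well-defined and distinct. \textbf{(2)} For the write phase, invoke the bucket-deduplication argument already made in \S\ref{sec:parallel}: a bucket modified several times during the epoch is written back only once, with its final contents, and reads that would have observed an intermediate bucket state are served from the locally buffered copy. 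I would argue that the \emph{final} contents written to each bucket at epoch end are determined by the deterministic Ring ORAM eviction logic applied to the fixed sequence of evict paths and early reshuffles — which buckets get written, and the logical contents each receives, is a deterministic function of the logical operation sequence (the paths assigned to blocks are part of the ``same logical actions'' in Ring ORAM). Hence the write phase performs exactly one physical write per bucket that sequential Ring ORAM writes at least once during the epoch, and the content of each such write equals the content of the corresponding \emph{last} sequential write. \textbf{(3)} Account for the durability store: the read phase additionally logs the accessed paths and slot indices, and the epoch boundary checkpoints the (padded) position/permutation/valid-invalid deltas; bundle these into the ``single write to the durability store.'' \textbf{(4)} Conclude that the parallel physical-operation set is $P_{\mathrm{seq}}$ (as a set) with repeated bucket writes collapsed to their final value, plus the durability write — all of which is a deterministic function of $P_{\mathrm{seq}}$.

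The main obstacle I anticipate is step~(2): making rigorous the claim that bucket deduplication in the parallel setting yields \emph{exactly} the same final bucket contents as the last sequential write to that bucket, including the interaction with \emph{dummiless writes} (newly written objects placed directly in the stash without a physical read) and with early reshuffles triggered by randomness. I would handle this by appealing to Ring ORAM's determinism more carefully: the eviction path targets and early-reshuffle decisions depend only on the access counter and the valid/invalid map, both of which evolve identically under any schedule consistent with the logical order; and the \emph{stash contents} at each evict path are a function of which real blocks were brought in by prior logical reads and which were flushed by prior evict paths — again schedule-independent. The subtlety is that in the parallel read/write split, blocks destined for a bucket may be ``in flight'' when a sequential execution would already have flushed them; the argument from \S\ref{sec:parallel} (delaying all physical writes to epoch end, serving intermediate reads locally) is precisely what makes the final-write equivalence go through, so I would lean on that design invariant rather than re-deriving it. A secondary, lighter obstacle is bookkeeping the ``plus a single write'' clause so that the checkpoint-delta padding (which is itself workload-independent, per \S\ref{sec:durability}) is clearly subsumed and does not reintroduce a data-dependent quantity.
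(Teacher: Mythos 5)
Your proposal is correct and follows essentially the same route as the paper's proof: read deduplication (each physical block that the sequential execution would touch is read exactly once, with evict-path read phases folded in), write buffering with per-bucket deduplication flushed once at epoch end, and the deterministic durability-store writes, all argued to depend only on the sequential physical access pattern. The only substantive differences are that the paper's proof also notes that crash-recovery rereads are determined by the same physical access pattern, and that your step (2) aims at a stronger content-equivalence property than the lemma actually requires --- it only needs the set, location, and timing of physical operations to be determined, since ciphertext contents are replaced in a later hybrid of the security proof.
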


\begin{proof}
  We note that, as described in \S\ref{sec:parallel}, the proxy performs all reads within an epoch before any writes (aside from the durability store).
  By construction, it ensures that each physical block that would be read at least once within an epoch in a fully sequential access is read exactly once in that epoch,
  and no other physical blocks are ever read (excluding crash recovery).

  This is enforced by holding a record of every block that has been read this epoch and then performing the reads of the sequential access, but skipping blocks that have already been read.
  Additionally, whenever an evict path operation would happen, the proxy reads every unread block from each bucket along that path, thus marking them as read.
  As the timing of evict paths is determined by how many data accesses have happened and their locations are deterministic,
  this enforcement mechanism is dependent only on the physical blocks accessed, not in any way on the data held in those blocks.

  Similarly, each block that would be written at least once in a sequentially-processed epoch is written exactly once at the end of the epoch.
  This is done by buffering writes in the proxy, allowing one buffered write of a physical block to overwrite any previous unflushed writes of that block.
  Then when the epoch ends, the proxy flushes all buffered writes.
  Again, the set of blocks being written is determined entirely by the physical access pattern of the sequential operation.

  Finally, a fixed amount of data is written to the durability store before each read batch, and the entire durability store is written with each write batch.
  This means that in normal operation, the location and timing of all reads and writes are determined by only the physical operations needed to perform the epoch operations sequentially and some extra completely deterministic operations.

  On crash recovery, the proxy reads the durability store and rereads all paths in the aborted epoch.
  This, again, is based entirely on physical access patterns.

  Hence all physical read and write operations within a parallelized epoch are determined entirely by the physical data operations needed to perform that epoch sequentially.
\end{proof}

\subsection{Proof of Security}

We now prove that the \sys protocol \prot (with access to \idealEpoch) is secure with respect to the ideal functionality described in Algorithm~\ref{alg:ideal-func}.
Let $\mathsf{Real}_{\adv,\env}(\lambda)$ denote the full transcript of \adv (including its inputs and randomness) when interacting with \prot.
Let $\mathsf{Ideal}_{\sdv,\env}(\lambda)$ denote the transcript produced by \sdv when run in the ideal world, interacting with \ideal.

\begin{theorem}
  Assume the encryption scheme used in \prot is semantically secure and the MACs are existentially unforgeable.
  For all probabilistic polynomial time (PPT) adversaries \adv and environments \env,
  there is a simulator \sdv[\adv] such that for all PPT distinguishers $\mathcal{D}$ there is some negligible function $\mathit{negl}$ such that
  \begin{align*}
    \Big| \Pr & \Big[\mathcal{D}\Big(\mathsf{Real}_{\adv,\env}(\lambda)\Big) = 1 \Big] \\
    & - \Pr\Big[\mathcal{D}\Big(\mathsf{Ideal}_{\sdv[\adv],\env}(\lambda)\Big) = 1 \Big] \Big| \leq \mathit{negl}(\lambda).
  \end{align*}
\end{theorem}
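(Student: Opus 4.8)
The plan is to exhibit the simulator \sdv[\adv] explicitly and then establish indistinguishability through a short chain of hybrids. The simulator runs \adv as a subroutine and maintains its own copies of the encryption and MAC keys---which in the ideal world are seen by no one else---together with the Ring ORAM metadata (position map, permutation and valid/invalid maps, early-reshuffle table, evict-path counter) and, via \idealEpoch, the epoch/read-batch counters. Whenever \ideal announces a read batch, \sdv[\adv] samples a fresh uniformly random set of logical accesses of the announced size, runs canonical Ring ORAM's read-phase logic on them to obtain the corresponding physical paths and slots (including the early-reshuffle and evict-path reads the counter dictates), then by Lemma~\ref{lem:parallel-oram} assembles these into the parallel schedule the real ORAM executor would use; it hands \adv encryptions of zero blocks tagged with MACs and the appropriate epoch/read-batch counter, along with the encrypted, max-padded read-path log. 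On a write-epoch announcement it runs the deterministic write-back and durability-store flush, again feeding \adv encryptions of zeros. On a crash announcement it forwards $(\text{``crash''}, c_e, c_b)$ to \adv, rolls its metadata back to the end of epoch $c_e$ using Ring ORAM's determinism, and on the ensuing recovery replays exactly the logged paths (remapping any real blocks), mirroring \prot; whenever \adv returns data, \sdv[\adv] MAC- and freshness-checks it just as the proxy would and, on failure, behaves as the proxy does on a corrupted read.

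Next I would set up the hybrid argument. Let $\mathsf{H}_0$ be the real world $\mathsf{Real}_{\adv,\env}(\lambda)$. In $\mathsf{H}_1$, every ciphertext that \prot would send to the storage server---ORAM buckets, stash, position/permutation/valid-invalid deltas, read-path logs---is replaced by an encryption of zeros of the same length, with MAC tags and counters recomputed accordingly. A standard sequence of sub-hybrids replacing one ciphertext at a time reduces $\bigl|\Pr[\mathcal{D}(\mathsf{H}_0)=1] - \Pr[\mathcal{D}(\mathsf{H}_1)=1]\bigr|$ to the semantic security of the encryption scheme, the reduction embedding the challenge ciphertext and simulating everything else, including \env's view of all communications. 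In $\mathsf{H}_2$, the data-dependent physical access pattern produced by the proxy is replaced by one generated from fresh uniform randomness exactly as the simulator does. By Lemma~\ref{lem:cache-dedup}, the post-caching, post-deduplication set of physical reads of any epoch is \emph{identically} distributed to that of a uniformly random logical access set of the same size; by Lemma~\ref{lem:parallel-oram}, the full parallel schedule of reads, writes, and durability-store writes over the epoch is a deterministic function of that read set together with the public counters; hence $\mathsf{H}_1$ and $\mathsf{H}_2$ are \emph{perfectly} indistinguishable. Since $\mathsf{H}_2$ is exactly $\mathsf{Ideal}_{\sdv[\adv],\env}(\lambda)$, the chain is complete up to the encryption term.

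The remaining ingredient is integrity. In $\mathsf{H}_0$ the proxy MAC-verifies and freshness-checks every value read back, using the epoch/read-batch counter as the unique identifier (cf.\ Appendix~\ref{sec:integrity}), so a malicious \adv returning a stale or fabricated block is detected. I would argue that, except with negligible probability, \adv can only make the proxy accept a value the proxy itself wrote to that location with the matching counter, by a reduction to existential unforgeability of the MAC; conditioned on this event the proxy's---and hence the simulator's---externally visible behavior is identical to the honest-storage case, contributing one further negligible term. A small but necessary point is that the read-path logs are themselves encrypted and the trusted counter is advanced only after the encrypted log is written but before any value is read, so an \adv that withholds an in-flight log write learns nothing about which paths will be re-accessed after recovery---precisely the leak closed in Appendix~\ref{sec:integrity}. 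Summing the semantic-security term, the MAC-unforgeability term, and the (zero) statistical term gives the claimed $\mathit{negl}(\lambda)$ bound.

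I expect the main obstacle to be the crash-recovery case of $\mathsf{H}_2$: showing that the physical access sequence remains a deterministic-plus-uniform function of public information even when \env crashes the proxy at an arbitrary point---including mid-recovery---and that the paths replayed after a crash coincide exactly with those \adv was shown before it. This rests on (i) Ring ORAM's fully deterministic eviction, so restoring the evict-path counter from the durable store reconstructs the correct bucket versions; (ii) the Crash Client modeling in Algorithm~\ref{alg:ideal-func}, which pins the crash to occur only immediately before an action that leaves the proxy or touches the trusted counter, so no intermediate, not-yet-externalized state matters; and (iii) the ordering of the counter update relative to the cloud writes, which makes every partially completed write either invisible to \adv or destined to be faithfully replayed. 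Threading these through the nested {\tinydag}-guarded operations of \ideal---and matching them against \prot's own crash points---is where the care lies.
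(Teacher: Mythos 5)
Your proposal is correct and follows essentially the same route as the paper's proof: a hybrid chain that first swaps real ciphertexts for encryptions of dummy data (semantic security), then swaps the data-dependent physical access pattern for a uniformly random one justified exactly by Lemma~\ref{lem:cache-dedup} and Lemma~\ref{lem:parallel-oram}, with MAC unforgeability absorbing the integrity/freshness checks and the crash-recovery replay handled by Ring ORAM's determinism plus the counter-update ordering. The only differences are cosmetic---you reorder the hybrids (encryption before the MAC step, which the paper handles first via its ``two ORAMs'' hybrid $H_1$) and fold the integrity argument into a conditioning step rather than a separate hybrid; just make sure your intermediate hybrid explicitly retains a plaintext shadow copy of the ORAM so the proxy can still answer clients once the stored ciphertexts encrypt zeros, as the paper's $H_1$/$H_2$ do.
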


\begin{proof}
  This proof follows from a series of hybrid simulators, each of which is indistinguishable from the previous.

  We define hybrids $H_0,\dotsc, H_4$.
  $H_0$ operates in the real world with $\sdv[0]$ being a ``dummy'' that passes all messages through to \adv unmodified.
  $H_1$ has two ORAMs that are identical except for the MACs, one maintained by \adv and the other maintained by \sdv[1].
  $H_2$ replaces all data in \adv's ORAM with random dummy data, independent from the actual data.
  $H_3$ replaces the access pattern in \adv's ORAM with random data accesses.
  Finally $H_4$ uses \sdv[\adv] in the ideal world and no longer maintains its own ORAM.

  \textbf{Hybrid $H_0$} contains a dummy simulator that passes messages between \adv and the proxy unchanged.
  This produces a transcript identical to the real world.

  \textbf{Hybrid $H_1$} passes all messages through to \adv, but also maintains its own copy of the ORAM, simultaneously processes requests internally.
  On initialization \sdv[1] generates its own MAC key according to the same distribution as \prot's MAC key.
  It then replaces the MACs of all data sent to \adv with valid MACs on the same data using this new key.
  When \adv responds to a request, \sdv[1] checks the MACs on the data.
  If they are correct, it forwards the (correct) response from it's own ORAM with the original MACs.
  If they are incorrect, it responds with a failure message.
  If \adv's response is correct, so too will \sdv[1]'s.
  If \adv's MACs to not verify, \prot fails, so a failure message produces the same result.
  If \adv's response is wrong but the MACs verify, \adv must have forged a MAC since they include the data, position, and epoch counter, and no two pieces of data are ever given the same position and epoch counter.
  Moreover, because \prot has access to a trusted epoch counter via \idealEpoch, it can properly verify that the data has the correct epoch counter, even after crashes.
  Thus, if \sdv[1] accepts an incorrect response with non-negligible probability, we can simulate \adv to forge a MAC with non-negligible probability.
  Hence $H_1$ is computationally indistinguishable from $H_0$.

  Note that the MACs are only used to check that \adv provided correct data.
  If the storage server is assumed to be honest, this will always be the case and we can eliminate the MACs entirely (and also $H_0$ and $H_1$ become identical).

  \textbf{Hybrid $H_2$} replaces all data blocks provided to \adv with valid encryptions of random data and MACs on those encryptions.
  It otherwise passes on requests, including the location and timing of reads and writes.
  \sdv[2] continues to furnish responses to the proxy's queries using its internal ORAM with the original data, checking MACs according to the same scheme as in $H_1$.
  \sdv[2] then output's \adv's transcript.
  As all data is encrypted, the only difference between $H_1$ and $H_2$ is the contents of the ciphertexts, and by assumption the encryption scheme is semantically secure.
  This means $H_1$ and $H_2$ must be computationally indistinguishable.

  \textbf{Hybrid $H_3$} replaces all data requests to \adv with properly-formatted requests for randomly chosen data.

  When \sdv[3] receives a location log for a read batch, it logs an encryption of random (unrelated) data with \adv.
  When \sdv[3] receives the read instruction for a read batch, it first selects a random set of dummy paths of the batch size.
  It then requests \adv perform the proper parallel read operation for that dummy data.
  If \adv replies with the data and the MACs verify, \sdv[3] performs the actual reads on its separate ORAM with real data and returns the real data to the proxy.

  When \sdv[3] is notified of the end of an epoch and given the associated write batch,
  it determines which physical blocks to write using Ring ORAM's deterministic write sequence based on the total number of operations (both reads and writes) in an epoch.
  It then performs proper parallel writes of new encryptions of dummy data to each of those locations.
  If \adv replied with confirmed writes, \sdv[3] performs the originally-specified operations on its separate ORAM and confirms success to the proxy.

  Finally, if \sdv[3] receives a request to handle a proxy crash at epoch $c_e$ and batch $c_b$,
  it queries \adv as per the crash recovery protocol for that epoch and batch.
  When \adv provides valid (MAC-verifying) read path logs for any batches this epoch, \sdv[3] provides the associated logs to the proxy.
  When the proxy issues redo read requests, \sdv[3] issues the same requests it did the first time to \adv for the associated batches.
  Because \sdv[3] did not crash, it is able to retain which paths were read without having to store them explicitly.
  In is possible that the last read batch requested during recovery corresponds to a read that was never executed, in which case \sdv[3] generates a new random read batch and executes that instead.
  If \adv responds correctly, \sdv[3] responds to the proxy's requests.

  By Lemma~\ref{lem:cache-dedup}, the physical operations needed to process all real requests in a given epoch sequentially
  form an identical distribution to the sequential accesses needed to process the random requests chosen by \sdv[3].
  By Lemma~\ref{lem:parallel-oram}, applying the parallelization process relies only on the sequential physical access pattern,
  meaning it can be applied the same way to \sdv[3]'s random operations as to the real operations provided by the proxy.
  This means that the operations \sdv[3] requests of \adv are identically distributed to those the proxy requests of \sdv[3] when there are no crashes.

  When a crash occurs, the recovery procedure is guaranteed to reread all previously-read data, and any future reads must have independently random paths.
  This is because \sdv[3] does not even generate random paths to read until the read request is issued, by which point the persistent batch counter $c_b$ is updated.
  So if a crash does occur, it will redo any previous reads and future operations are treated as regular read/write batches with the same (independent) distribution.
  Since these are the only difference between $H_2$ and $H_3$, the two must produce identical distributions.

  \textbf{Hybrid $H_4$} now interacts with the ideal functionality and no longer maintains its own internal ORAM copy, only the data necessary to perform actions on \adv's, including the new MAC and encryption keys.
  The only data \sdv[3] was using to compute requests for \adv was the timing of batches and crash recoveries, and the epoch and batch counters during recovery.
  As \ideal explicitly provides all of that information, \sdv[4] is able to provide \adv with an identical view.
  Note that on crash recovery, this identical view requires completing a crash-recover epoch, which \sdv[4] can do by creating an appropriate number of read and write operations as it would in $H_3$.
  This means that $H_3$ and $H_4$ are identically distributed.

  Thus we see that $H_0$ corresponds to the real world, $H_4$ corresponds to the ideal world,
  and each sequential pair of $(H_i, H_{i+1})$ produce computationally indistinguishable transcripts.
  Thus it must be the case that $H_0$ and $H_4$ form computationally indistinguishable transcripts, so \prot realizes \ideal.
\end{proof}

\end{document}